\let\svthefootnote\thefootnote
\def\th@newremark{\th@remark\thm@headfont{\bfseries}}
		\newtheorem{theorem}{Theorem}
    \newtheorem{proposition}{Proposition}
\theoremstyle{newremark}
\newtheorem{cor}{Corollary}
\theoremstyle{newremark}
\theoremstyle{newremark}
\newtheorem{fact}{Fact}
\theoremstyle{newremark}
\newtheorem{problem}{Problem}
\theoremstyle{newremark}
\newtheorem{assumption}{Assumption}
\theoremstyle{newremark}
\newtheorem{lemma}{Lemma}
\theoremstyle{definition}
\newtheorem{defn}{Definition}
\theoremstyle{newremark}
\title{A robust machine learning method for cell-load approximation in wireless networks}
\name{Daniyal Amir Awan$^{\star}$ \qquad Renato L.G. Cavalcante$^{\star\dagger}$ \qquad Slawomir Stanczak$^{\star\dagger}$}
\address{$^{\star}$ Technical University Of Berlin, Einsteinufer 27, 10587 Berlin, Germany \\
        $^{\dagger}$Fraunhofer Institute for Telecommunications, HHI, Einsteinufer 37, 10587 Berlin, Germany}
\begin{document}

%\ninept
%

\maketitle
\begin{abstract}
We propose a learning algorithm for the problem of cell-load estimation in dense wireless networks. The proposed algorithm is robust in the sense that it is designed to cope with the uncertainty due to a small number of training samples. This scenario is highly relevant in wireless networks where training has to be performed on short time scales due to fast varying communication environment. The first part of this work studies the set of achievable rates and shows that this set is compact. We then prove that the mapping relating a feasible rate vector to the corresponding unique fixed point of the non-linear load mapping is monotone and uniformly continuous. Utilizing these properties, we apply an approximation framework which achieves the best worst-case performance. Moreover, the approximation preserves the monotonicity and continuity properties. Simulation results show that the proposed method exhibits better robustness and accuracy for small training sets when compared to standard approximation techniques for multivariate data.           
\end{abstract}
\begin{keywords}
machine learning, 5G, ultra-dense networks (UDNs), multivariate scattered data, optimal approximation
\end{keywords}
\section{Introduction}
\label{sec:introduction}
The load-coupling model (see, e.g., \cite{Siomina2012,Fehske2012} and \cite{Majewski2010}) is widely used when designing networks according to the long-term evolution (LTE) standard and has also attracted attention in the context of fifth-generation (5G) networks. More specifically, the load-coupling model has been used in various optimization frameworks dealing with different aspects of network design including, but not limited to, data offloading \cite{Ho2014}, proportional fairness \cite{miguel2016}, energy optimization \cite{daniyal2016,Pollakis2016}, and load balancing \cite{siomina2012b}. 

\let\thefootnote\relax\footnote{This research was supported by Grant STA 864/9-1 from German Re-
search Foundation (DFG).}
\addtocounter{footnote}{-1}\let\thefootnote\svthefootnote

%If inter-cell interference coordination is performed, in the shape of assigning orthognal RBs to users in different cells, it will require large amount of time-frequency resources for signaling. An alternate and practical approach is to keep inter-cell coordination to a minimum and instead assign RBs randomly. The load-coupling model (see, e.g., \cite{Siomina2012,Fehske2012} and \cite{Majewski2010}) is based on such an assignemnt of RBs. 
The radio resource management (RRM) in future 5G networks is expected to be similar to the RRM in orthogonal frequency-division multiple access (OFDMA)-based networks such as LTE. Unfortunately many of the RRM problem formulations, such as small-scale optimal assignment of time-frequency resource blocks (RBs) to users, have been shown to be NP-hard \cite{Wong04}. As a result, interference models that are able to capture the long-term behavior of OFDMA-like networks while giving rise to tractable problem formulations have been the focus of recent research. The aforementioned non-linear load-coupling model is such a network-layer model that considers long-term average RB consumption and it has been shown to be sufficiently accurate \cite{Fehske2012}. In this model, the cell-load  at a base station (BS) is the proportion of RBs scheduled to support a particular rate demand. Therefore, given some power budget that can be used for transmission, each BS needs to calculate the cell-load required to serve given rate demands.
%On one hand, the value of cell-load is required by a BS to calculate the correct amount of RBs to serve a rate demand, and on the other hand, the value of load can also be seen as probability of interference caused on all RBs to other base stations (BSs) \cite{Ho2014}. 
%If BSs are operating at maximum load (by consuming all its RBs in transmission), the system cannot admit more users or tolerate fluctuation in rate demand or channel qualities. Therefore, as proposed in \cite{renato2017}, it is useful to keep low load values at all BSs to avoid user blocking or complex frequent handovers. 

In \cite{Ho2014}, the authors presented an intuitive result showing that cell-load is monotonic in user rate demand and the non-linear coupling between cells implies that increasing an arbitrary rate demand in the network increases the cell-load at each BS. Therefore before serving a higher rate demand, it is important for a BS to have a reliable estimation of the impact of this increase to the neighboring BSs in terms of cell-load and interference. This estimation can be used to make RRM and self-organizing-network (SON) algorithms more reliable and efficient. 
%Moreover for a BS, knowing the load values induced at BSs in the neighborhood before it increases the rate demand of users connected to it can be very useful. If increasing the rate takes the values of load closer to $1$, a BS might decide against it. 
The difficulty in performing such management tasks lies in the need for calculating the expected value of induced cell-load at BSs for given user rates. This is because such a computation typically uses iterative methods requiring a large amount of network information such as pathlosses, user rates etc. We provide a robust and optimal machine learning technique which allows BSs to approximate cell-load values induced for any given rate demand vector. Moreover, the complexity of the proposed method is low and the algorithm can be implemented in parallel at each BS. 
%In \cite{Ho2014}, the authors presented the intuitive result that the load is monotonic in user rate demand. If the rate demand in the network exceeds a certain value, the network might not be able to support such a demand. On the other hand, there is a close relation between network topology, transmission power values, rate and the load induced by these parameters. Therefore, in this context, the expected or average value of load for a given network parameters is an accurate indicator of whether the current configuration is feasible or not \cite{Siomina2012}. On the other hand, owing to the frequency reuse in dense networks, the value of load can also be seen as the probability of interference caused on all RBs to other base stations (BSs) \cite{Ho2014}. This observation is based upon the fact that a feasible value of load is always between $0$ and $1$. If a BS consumes almost all RBs for transmission in its cell (which results in a load value closer to $1$) it is highly likely to cause interference to neighboring cells where the frequency spectrum is being reused. 

The contributions of this study are as follows. We first study the feasible rate region and obtain some properties of the cell-load as a function of rate demand vector. The feasible rate region is defined as the set of rate demand vectors for which the cell-load at each BS is less than or equal to 1 \cite{Ho2014}. To the best of our knowledge, not much attention has been paid so far to the structure of the feasible rate region and this paper provides some initial results. In particular, we obtain the result that the feasible rate region is compact. Moreover, we prove that the cell-load is a uniformly continuous mapping over the set of feasible rate region. 
%Besides the aforementioned monotonicity result and various optimization frameworks (see, e.g., \cite{siomina2015} and \cite{Ho2014}), to the best of our knowledge cell-load as a function of rate demand has not been studied. 

Based on these results, as our second contribution, we address the problem of cell-load approximation for a given rate demand vector. Previous studies dealing with this problem, for example, in the context of data offloading \cite{Ho2014} and maximizing the scaling-up factor of traffic demand \cite{siomina2015} have used model based methods that require full information about channel gains, pathlosses etc. In contrast, we approach this problem from a machine learning perspective, in which case no channel information is required. Owing to the dynamicity of dense wireless networks, any machine learning algorithm has to train the network within a relatively small time period. As a consequence, the training sample set is small and the information about the unknown function to be approximated is scarce. In this highly relevant setting, we propose a learning algorithm which achieves best approximation in the minimax sense, i.e., where the worst possible error under uncertainty is minimized. 
Yet another difficulty lies in obtaining a shape preserving approximation. It is known that the cell-load based on the underlying load-coupling model is monotonic in rate demand \cite{Ho2014}. Therefore, the approximation should preserve this monotonicity. However, incorporating monotonicity in machine learning algorithms for multivariate data with arbitrary dimensions is difficult, and most of the well-known machine learning algorithms do not preserve the shape of the true function \cite{Beliakov2005}. The work in \cite{Kotlowski16} shows that monotonicity is also hard to incorporate in popular online learning methods.  
%For a survey on the failure of popular online schemes under monotonicity see \cite{Kotlowski16}.
In contrast to these studies, the author in \cite{Beliakov2005} proposed a shape preserving multivariate approximation. We propose a vector-valued version of this method for cell-load approximation at multiple BSs in parallel. Finally, we compare our method with popular multivariate machine learning techniques through simulations and show that our method outperforms them under the aforementioned restriction of a small training set.    
%\noindent{\bf Copyright notice 1:}\\
%For papers in which all authors are employed by the US government, the copyright notice is:\\
%{U.S.\ Government work not protected by U.S.\ copyright}
%\\[2ex]
%\noindent{\bf Copyright notice 2:}\\
%For papers in which all authors are employed by a Crown government (UK, Canada, and Australia), the copyright notice is:\\
%{978-1-5090-6341-3/17/\$31.00 {\copyright}2017 Crown}
%\\[2ex]
%\noindent{\bf Copyright notice 3:}\\
%For papers in which all authors are employed by the European Union, the copyright notice is:\\
%{978-1-5090-6341-3/17/\$31.00 {\copyright}2017 European Union}
%\\[2ex]
%\noindent{\bf Copyright notice 4:}\\
%For all other papers the copyright notice is:\\
%{978-1-5090-6341-3/17/\$31.00 {\copyright}2017 IEEE}
\section{Preliminaries}\label{sec:preliminaries}
Throughout this study, $\mathbb{R}_{+}$ and $\mathbb{R}_{++}$ denote the set of non-negative and positive reals, respectively.  We denote by $C(\mathcal{X})$ the space of vector-valued continuous functions defined on $\mathcal{X} \subset \mathbb{R}^{N}_{++}$. Likewise, we denote by $\mathbf{g}\in C(\mathcal{X})$ a vector-valued function whose values at a point $\mathbf{x}\in \mathcal{X}$ are given by $\mathbf{g}(\mathbf{x})=[g_1(\mathbf{x}),g_2(\mathbf{x}),...,g_M(\mathbf{x})]^{T}$, where each $g_i:\mathcal{X}\rightarrow [0,1]$, $i=1,\ldots,M$, is a continuous function. The norms $\left\|\cdot\right\|$ and $\left\|\cdot\right\|_{\infty}$ 
are the standard Euclidean norm and the $l_{\infty}$ norm, respectively. We denote by $\overline{\mathcal{X}}$ the closure of the set $\mathcal{X}$.
For a compact set $\mathcal{X}$ and a vector-valued continuous function $\mathbf{g}\in C(X)$ , we define the supremum or uniform norm $\left\|\cdot\right\|_{C(\mathcal{X})}$ as
\begin{equation}
\left\|\mathbf{g}\right\|_{C(\mathcal{X})}=\sup_{\mathbf{x}\in\mathcal{X}}\max_{1 \leq i \leq M}\,g_{i}(\mathbf{x}),
\label{eqn:cheb_norm}
\end{equation}
where the $\sup$ is attained because pointwise maximum of finitely many continuous functions is continuous and $\mathcal{X}$ is compact. We denote by $(\mathbf{x})_{+}$ the operation $\max\left\{\mathbf{x},\mathbf{0}\right\}$ for a vector $\mathbf{x} \in \mathbb{R}^{N}$, where in contrast to the $\max$ operation in \eqref{eqn:cheb_norm}, the $\max$ is taken component-wise and $\mathbf{0}$ is the all-zero vector. The distinction between the two usages of the $\max$ operation will be clear by the context in which they are used in the study. Finally, for two vectors $\mathbf{x}$ and $\mathbf{y}$, $\mathbf{x}\leq\mathbf{y}$ should be understood component-wise. 
%
%\begin{defn}[\textit{Lipschitz Continuity}]
%\textit{A function $f:\mathbb{R}_{++}^{N}\rightarrow [0,1]$ is called Lipschitz or Lipschitz continuous on a set $\mathcal{X}\subset\mathbb{R}_{++}^{N}$ if there exists $L\geq 0$, such that $(\forall \mathbf{x} \in \mathcal{X})(\forall \mathbf{y} \in \mathcal{X})\left|f(\mathbf{x})-f(\mathbf{y})\right|\leq L\left\|\mathbf{x}-\mathbf{y}\right\|$, and the smallest such value $L\geq 0$ is called the Lipschitz constant of $f$.  
%}\label{def:lf}
%\end{defn}
%
\begin{defn}[\textit{Monotone Function}]
\textit{Let $\mathbf{f}:\mathbb{R}_{++}^{N}\rightarrow[0,1]^{M}$ be a vector-valued function. Then $\mathbf{f}$ is called monotone on $\mathcal{X}\subset\mathbb{R}_{++}^{N}$ if $(\forall \mathbf{x} \in \mathcal{X})(\forall \mathbf{y} \in \mathcal{X}) \mathbf{x}\leq\mathbf{y}\Rightarrow\mathbf{f}(\mathbf{x})\leq\mathbf{f}(\mathbf{y})$.}
\label{def:mf}
\end{defn}
\begin{defn}[\textit{Lipschitz Monotone Functions}]
\textit{Let $\mathbf{f}:\mathbb{R}_{++}^{N}\rightarrow[0,1]^{M}$ be a vector-valued function with the $i$th component $f_i:\mathbb{R}_{++}^{N}\rightarrow [0,1], i=1,...,M$. We say that $\mathbf{f}$ belongs to the class of \textit{ Lipschitz Monotone Functions} (LIMF) if $\mathbf{f}$ is monotone on $\mathcal{X}$ and each component $f_i$ is Lipschitz on $\mathcal{X} \subset\mathbb{R}_{++}^{N}$ , i.e., $(\forall i\in\left\{1,2,\ldots,M\right\})$$(\exists L_{i}\in\mathbb{R}_{+})$$(\forall \mathbf{x} \in \mathcal{X})(\forall \mathbf{y} \in \mathcal{X})\left|f_{i}(\mathbf{x})-f_{i}(\mathbf{y})\right|\leq L_{i}\left\|\mathbf{x}-\mathbf{y}\right\|$.}
\label{def:lmf}
\end{defn}
%%SYSTEM MODEL
\section{System Model}\label{sec:system_model}
\subsection{Non-linear Load Coupling Model} \label{load_coupling_model} 
In this study, we consider a dense urban cellular base station (BS) deployment. The service area is represented by a grid of pixels, each occupying a small area which we refer to as a test point (TP) (see \cite{Siomina2012,siomina2015}, and \cite{Pollakis2016}). The concept of test points provides a network-layer view of quality-of-service (QoS) and large-scale channel conditions in a network. Within each TP, users are assumed to experience uniform signal propagation conditions. We use $r_{j}>0$ to denote the aggregated user rate demand within TP $j$ per unit time. It is assumed that if the rate requirement of each TP is met, then the QoS requirements of all users in the network are also met. Under this assumption, small-scale fluctuations in individual user rates are averaged out and are compensated by the lower layers of the protocol stack \cite{Pollakis2016}. 
We use $\mathcal{M}=\left\{1,2,...,M\right\}$ and $\mathcal{N}=\left\{1,2,...,N\right\}$ to denote the set of BSs and TPs, respectively. We consider a downlink transmission scenario and denote the vector of power levels of all BSs by $\mathbf{p}\in\mathbb{R}_{++}^M$. Throughout this study, the power and user assignment to BSs is assumed to be fixed. 
%\footnote{This is commonly the case in mobile networks because of hardware limitations and lack of power control mechanisms.}. 
We collect the rate demand of TPs in a vector $\mathbf{r}=[r_1,r_2,...,r_N]^{T}$. 
Assume a case where BS $i\in\mathcal{M}$ is serving a user in TP $j\in\mathcal{N}(i)$, where $\mathcal{N}(i)$ is the set of TPs connected to BS $i$. The bandwidth resources available at each BS are divided into resource blocks (RBs). The load-based SINR model (see \cite{Siomina2012}, \cite{Ho2014}) incorporates the  inter-cell interference from a particular BS $k \in \mathcal{M}$ as the product $p_{k}G_{k,j}\rho_{k}\geq 0$, where $G_{k,j}\geq0$ is the channel gain between BS $k$ and TP $j$ and $0\leq\rho_{k}\leq 1$ is referred to as the \textit{activity level} or \textit{cell-load} at BS $k$ \cite{Fehske2012}. With this model in hand, for given power levels $\mathbf{p}$, cell-load $\boldsymbol{\rho}$ and user assignment, the SINR $\gamma_{ij}$ of the wireless link between BS $i\in\mathcal{M}$ and TP $j\in\mathcal{N}$ is expressed as \cite{Fehske2012,Siomina2012} and \cite{Ho2014} 
\begin{equation}
\gamma_{ij}(\mathbf{p},\boldsymbol{\rho})=\frac{p_{i}G_{i,j}}{\sum_{k \in \mathcal{M}\backslash\left\{i\right\}}p_{k}G_{k,j}\rho_{k}+\sigma^{2}},
\label{eqn:SINR}
\end{equation}
where $\boldsymbol{\rho}=[\rho_1,\rho_2,...,\rho_M]^{T}$ is the vector containing the cell-load levels at all BSs in the network and $\sigma^{2}$ is the noise power.  
Given the value of SINR $\gamma_{ij}(\mathbf{p},\boldsymbol{\rho})$, a BS can transmit at rate $r_{ij}^s=B\log(1+\gamma_{ij}(\mathbf{p},\boldsymbol{\rho}))$ bits/s/RB reliably to a user in TP $j$, where $B$ is the bandwidth of the RB. Therefore, BS $i$ allocates $\rho_{ij}=r_{j}/r_{ij}^s$ RBs to meet the demand $r_{j}$. Summing over all TPs connected to BS $i$ we obtain
\begin{equation}
\rho_{i}=\frac{1}{RB}\sum_{j\in\mathcal{N}(i)} \frac{r_{j}}{\log(1+\gamma_{ij}(\mathbf{p},\boldsymbol{\rho}))},
\label{eqn:load_i}
\end{equation}  
where $R$ is the number of RBs available at each BS. For a fixed rate vector $\mathbf{r}\in\mathbb{R}^{N}_{++}$, writing the above equation for each $i\in\mathcal{M}$ results in a system of non-linear equations,
\begin{equation}
\boldsymbol{\rho}=\mathbf{q}(\boldsymbol{\rho},\mathbf{r}), 
\label{eqn:load_mapping}
\end{equation} 
where $\mathbf{q}:\mathbb{R}_{+}^{M}\times\mathbb{R}_{++}^{N}\rightarrow\mathbb{R}_{++}^{M}$ is referred to as \textit{load mapping}. Given $\mathbf{r}\in \mathbb{R}_{++}^{N}$ the mapping $\Gamma_{\mathbf{r}}:\mathbb{R}_{+}^{M}\rightarrow\mathbb{R}_{++}^{M}:\boldsymbol{\rho}\mapsto\mathbf{q}(\boldsymbol{\rho},\mathbf{r})$ is a \textit{positive concave mapping} \cite{Cavalcante2016}, so it also belongs to the class of \textit{standard interference mappings} \cite{yates95}. Therefore, for a given rate demand vector $\mathbf{r}\in \mathbb{R}_{++}^{N}$, the set $\text{Fix}(\Gamma_{\mathbf{r}}):=\left\{\boldsymbol{\rho}\in\mathbb{R}^{M}_{++}|\Gamma_{\mathbf{r}}(\boldsymbol{\rho})=\boldsymbol{\rho}\right\}$ contains at most one fixed point. If $\text{Fix}(\Gamma_{\mathbf{r}})\neq\emptyset$, the unique fixed point is the solution to \eqref{eqn:load_mapping}. We define a \textit{feasible} rate demand as follows: 
\begin{defn}[\textit{Feasible Rate Demand Vector}]
\textit{A rate demand vector $\mathbf{r}\in\mathbb{R}_{++}^{N}$ is \textit{feasible} for the network if and only if $\text{Fix}(\Gamma_{\mathbf{r}})\neq\emptyset$ and
$\boldsymbol{0}\leq \boldsymbol{\rho}^{\ast}\leq\mathbf{1}$, where $\boldsymbol{\rho}^{\ast}\in\text{Fix}(\Gamma_{\mathbf{r}})$}. 
\label{defn:feasible_set}
\end{defn}

Denote by $\mathcal{X}_{f}$ the set of all feasible rate demand vectors as defined in Definition \ref{defn:feasible_set}. Denote by $\mathbb{R}^{N}_{++}\ni\mathbf{r}_{\text{min}}$, the minimum rate requirement of users and consider the set $\mathcal{X}_{\text{min}}:=\{\mathbf{r} \in \mathbb{R}_{++}^{N}|\mathbf{r}\geq \mathbf{r}_{\text{min}}\}$. For the remainder we define the set of feasible rate demand vectors as $\mathcal{X}:=\mathcal{X}_{f}\bigcap \mathcal{X}_{\text{min}}$ and the set of fixed points as $\mathcal{Y}:=\left\{\boldsymbol{\rho}\in [0,1]^M|(\exists\mathbf{r}\in\mathcal{X})\Gamma_{\mathbf{r}}(\boldsymbol{\rho})=\boldsymbol{\rho}\right\}$. Furthermore, we assume that $\mathcal{X},\mathcal{Y}\neq\emptyset$. 
%Furthermore, we assume that $\exists \mathbf{r}_{\text{min}}\in\mathcal{X}$ and $\exists \epsilon >0$ such that $(\forall \mathbf{r} \in \mathcal{X})$ $\mathbf{r}\geq \mathbf{r}_{\text{min}} > \epsilon.\mathbf{1}$. Note that these assumption are easily satisfied in practical networks with per user minimum rate requirements.  
%

In the following section, we proceed to study some properties of the feasible set of rate vectors and the corresponding fixed points. For ease of reference, we first present some important properties of positive concave mappings.
\begin{fact}\cite{yates95}
Let $\Gamma:\mathbb{R}_{+}^{M}\rightarrow\mathbb{R}_{++}^{M}$ be a positive concave mapping. Then each of the following holds:
\begin{enumerate}
\item $\text{Fix}(\Gamma)\neq\emptyset$ if and only if there exists $\boldsymbol{\rho}\in\mathbb{R}_{+}^{M}$ such that $\Gamma(\boldsymbol{\rho})\leq\boldsymbol{\rho}$.
\item If it exists, the fixed point can be found by the fixed point iteration $\boldsymbol{\rho}_{n+1}=\Gamma(\boldsymbol{\rho}_{n})$, $n\in\mathbb{N}$, with $\boldsymbol{\rho}_{1}\in\mathbb{R}_{+}^{M}$ chosen arbitrarily.
\item The fixed point iteration is increasing (resp. decreasing) if $\boldsymbol{\rho}_{1}\leq\Gamma(\boldsymbol{\rho}_{1}) (\text{resp.} \boldsymbol{\rho}_{1}\geq\Gamma(\boldsymbol{\rho}_{1}))$.
\end{enumerate}
\label{rem:fixed_point}
\end{fact}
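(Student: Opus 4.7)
The plan is to prove each of the three parts in order, drawing on the two core properties that positive concave mappings are known to inherit: (i) monotonicity of $\Gamma$, which follows from concavity together with $\Gamma(\boldsymbol{0})\geq\boldsymbol{0}$ componentwise, and (ii) continuity, which follows from concavity on an open convex domain. These two properties do essentially all the work.

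For part 1, the forward direction is immediate: if $\boldsymbol{\rho}^{\ast}\in\text{Fix}(\Gamma)$, then $\Gamma(\boldsymbol{\rho}^{\ast})=\boldsymbol{\rho}^{\ast}\leq\boldsymbol{\rho}^{\ast}$. For the converse, suppose $\Gamma(\boldsymbol{\rho})\leq\boldsymbol{\rho}$ for some $\boldsymbol{\rho}\in\mathbb{R}_{+}^{M}$. Initialize $\boldsymbol{\rho}_{1}:=\boldsymbol{\rho}$ and define $\boldsymbol{\rho}_{n+1}:=\Gamma(\boldsymbol{\rho}_{n})$. Applying monotonicity of $\Gamma$ repeatedly to $\boldsymbol{\rho}_{2}=\Gamma(\boldsymbol{\rho}_{1})\leq\boldsymbol{\rho}_{1}$ gives $\boldsymbol{\rho}_{n+1}\leq\boldsymbol{\rho}_{n}$ by induction, so the sequence is componentwise non-increasing. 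It is also bounded below by $\boldsymbol{0}$ since $\Gamma$ is positive, hence it converges to some $\boldsymbol{\rho}^{\ast}\in\mathbb{R}_{+}^{M}$. Continuity of $\Gamma$ then yields $\Gamma(\boldsymbol{\rho}^{\ast})=\boldsymbol{\rho}^{\ast}$, i.e., $\text{Fix}(\Gamma)\neq\emptyset$.

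For part 2, uniqueness of the fixed point (already recalled in the paper) lets me denote it by $\boldsymbol{\rho}^{\ast}$, and I need to show that for an arbitrary $\boldsymbol{\rho}_{1}\in\mathbb{R}_{+}^{M}$ the iterates $\boldsymbol{\rho}_{n+1}=\Gamma(\boldsymbol{\rho}_{n})$ converge to $\boldsymbol{\rho}^{\ast}$. The plan is to sandwich: choose scalars $\alpha>1$ and $\beta\in(0,1)$ such that $\beta\boldsymbol{\rho}^{\ast}\leq\boldsymbol{\rho}_{1}\leq\alpha\boldsymbol{\rho}^{\ast}$, which is possible because $\boldsymbol{\rho}^{\ast}\in\mathbb{R}_{++}^{M}$. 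Using monotonicity of $\Gamma$ together with the scalability property $\Gamma(\alpha\boldsymbol{\rho}^{\ast})<\alpha\boldsymbol{\rho}^{\ast}$ and $\Gamma(\beta\boldsymbol{\rho}^{\ast})>\beta\boldsymbol{\rho}^{\ast}$ (both consequences of positive concavity applied to $\boldsymbol{\rho}^{\ast}$), the two sequences initialized at $\alpha\boldsymbol{\rho}^{\ast}$ and $\beta\boldsymbol{\rho}^{\ast}$ are respectively decreasing and increasing, and by part 1 and uniqueness both converge to $\boldsymbol{\rho}^{\ast}$. Monotonicity of $\Gamma$ propagates the sandwich to every iterate, so $\boldsymbol{\rho}_{n}\to\boldsymbol{\rho}^{\ast}$ as well. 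The main obstacle here is verifying the strict inequalities $\Gamma(\alpha\boldsymbol{\rho}^{\ast})<\alpha\boldsymbol{\rho}^{\ast}$ and $\Gamma(\beta\boldsymbol{\rho}^{\ast})>\beta\boldsymbol{\rho}^{\ast}$ from concavity and positivity; this is where the argument rests and, in my view, the most delicate step.

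For part 3, suppose $\boldsymbol{\rho}_{1}\leq\Gamma(\boldsymbol{\rho}_{1})=\boldsymbol{\rho}_{2}$. Applying monotonicity of $\Gamma$ gives $\boldsymbol{\rho}_{2}=\Gamma(\boldsymbol{\rho}_{1})\leq\Gamma(\boldsymbol{\rho}_{2})=\boldsymbol{\rho}_{3}$, and by induction $\boldsymbol{\rho}_{n}\leq\boldsymbol{\rho}_{n+1}$ for all $n$, i.e., the iteration is increasing. The decreasing case is symmetric: $\boldsymbol{\rho}_{1}\geq\Gamma(\boldsymbol{\rho}_{1})$ implies $\boldsymbol{\rho}_{n}\geq\boldsymbol{\rho}_{n+1}$ by the same inductive argument. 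This final part is essentially a direct consequence of monotonicity and requires no additional machinery.
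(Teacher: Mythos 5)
The paper does not actually prove this Fact; it is quoted from \cite{yates95} (together with the positive-concave-mapping results of \cite{Cavalcante2016}), so there is no internal proof to compare against. Your architecture is the standard one from that literature and is essentially sound: monotone iteration from a point with $\Gamma(\boldsymbol{\rho})\leq\boldsymbol{\rho}$ for part 1, a sandwich between a scaled-up and scaled-down copy of the fixed point for part 2, and induction on monotonicity for part 3. The strict inequalities you single out as delicate do hold: by concavity, $\Gamma(\beta\boldsymbol{\rho}^{\ast})\geq\beta\Gamma(\boldsymbol{\rho}^{\ast})+(1-\beta)\Gamma(\boldsymbol{0})>\beta\boldsymbol{\rho}^{\ast}$ for $\beta\in(0,1)$, and $\Gamma(\boldsymbol{\rho}^{\ast})\geq\tfrac{1}{\alpha}\Gamma(\alpha\boldsymbol{\rho}^{\ast})+(1-\tfrac{1}{\alpha})\Gamma(\boldsymbol{0})$ gives $\Gamma(\alpha\boldsymbol{\rho}^{\ast})<\alpha\boldsymbol{\rho}^{\ast}$ for $\alpha>1$; this is exactly Yates' scalability axiom, which does follow from concavity plus $\Gamma(\boldsymbol{0})\in\mathbb{R}_{++}^{M}$.

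The one genuine flaw is your justification of monotonicity: concavity together with $\Gamma(\boldsymbol{0})\geq\boldsymbol{0}$ does \emph{not} imply monotonicity (the affine map $\rho\mapsto 1-\rho$ satisfies both and is decreasing). What you need is concavity plus positivity of $\Gamma$ on the whole unbounded orthant: along any ray $\boldsymbol{x}+t(\boldsymbol{y}-\boldsymbol{x})$, $t\geq 0$, with $\boldsymbol{x}\leq\boldsymbol{y}$, each component of $\Gamma$ is concave and bounded below by $0$, hence cannot decrease, which gives $\Gamma(\boldsymbol{x})\leq\Gamma(\boldsymbol{y})$; alternatively, simply invoke that positive concave mappings are standard interference mappings \cite{Cavalcante2016}, as the paper does. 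Two smaller points to patch: (i) concave maps on $\mathbb{R}_{+}^{M}$ are guaranteed continuous only on the interior, so you should note that the limits you take are strictly positive (e.g., the decreasing limit satisfies $\boldsymbol{\rho}^{\ast}\geq\Gamma(\boldsymbol{\rho}^{\ast})\in\mathbb{R}_{++}^{M}$) before invoking continuity; (ii) in part 2 the lower bound $\beta\boldsymbol{\rho}^{\ast}\leq\boldsymbol{\rho}_{1}$ is impossible when $\boldsymbol{\rho}_{1}$ has zero components, so start the sandwich at $\boldsymbol{\rho}_{2}=\Gamma(\boldsymbol{\rho}_{1})\in\mathbb{R}_{++}^{M}$, and note that the increasing branch is bounded above by $\boldsymbol{\rho}^{\ast}$ (by monotonicity) so that it indeed converges. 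With these repairs the proof is complete.
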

%%Feasible Rate Demand Set
\section{Feasible Rate Region and Fixed Points}\label{sec:feasible_rate_demand_set}
In this section, we show that the feasible rate region is compact and the fixed points are generated by a uniformly continuous monotonic mapping on this set. The compactness of the domain set and continuity of the function to be approximated are two important conditions in theory of minimax optimal recovery/approximation of functions (\cite{Golub1971}), which we use in the proposed learning algorithm in Section \ref{section:the_learning_problem}.

We start this section by two simple results. The first result shows that the fixed point of the load mapping in \eqref{eqn:load_mapping} is monotonic in rate demand. See \cite[Theorem 2]{Ho2014} for an alternative proof. We provide a simpler proof that suffices for our purposes. 
\begin{lemma}
\textit{Consider any two rate demand vectors $\mathbf{r}^{k},\mathbf{r}^{j} \in \mathcal{X}$ and the fixed points $\boldsymbol{\rho}^j\in\text{Fix}(\Gamma_{\mathbf{r}^{j}})$ and $\boldsymbol{\rho}^k\in\text{Fix}(\Gamma_{\mathbf{r}^{k}})$. Then $\mathbf{r}^{j} \geq \mathbf{r}^{k} \Rightarrow \boldsymbol{\rho}^j\geq\boldsymbol{\rho}^k$.}
\label{lemma:one}
\end{lemma}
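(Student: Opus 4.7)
The plan is to exploit two elementary observations about the load mapping and then invoke Fact \ref{rem:fixed_point} to conclude.

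First, I would note that for the load mapping $\mathbf{q}(\boldsymbol{\rho},\mathbf{r})$ defined through \eqref{eqn:load_i}, the rate vector $\mathbf{r}$ appears only linearly in the numerators (the denominators $\log(1+\gamma_{ij}(\mathbf{p},\boldsymbol{\rho}))$ depend on $\boldsymbol{\rho}$ and the channel parameters, but not on $\mathbf{r}$). Hence, for any fixed $\boldsymbol{\rho}\in\mathbb{R}_{+}^M$, the map $\mathbf{r}\mapsto\mathbf{q}(\boldsymbol{\rho},\mathbf{r})$ is componentwise nondecreasing, so $\mathbf{r}^{j}\geq\mathbf{r}^{k}$ implies
\begin{equation*}
(\forall\boldsymbol{\rho}\in\mathbb{R}_{+}^M)\ \Gamma_{\mathbf{r}^j}(\boldsymbol{\rho})\geq\Gamma_{\mathbf{r}^k}(\boldsymbol{\rho}).
\end{equation*}

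Second, I would evaluate this pointwise inequality at $\boldsymbol{\rho}=\boldsymbol{\rho}^{k}$. Since $\Gamma_{\mathbf{r}^k}(\boldsymbol{\rho}^{k})=\boldsymbol{\rho}^{k}$, we obtain
\begin{equation*}
\Gamma_{\mathbf{r}^j}(\boldsymbol{\rho}^{k})\geq\Gamma_{\mathbf{r}^k}(\boldsymbol{\rho}^{k})=\boldsymbol{\rho}^{k},
\end{equation*}
which is precisely the condition that initializes an \emph{increasing} fixed-point iteration in the sense of Fact \ref{rem:fixed_point}(3).

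Third, I would apply Fact \ref{rem:fixed_point} with the iteration $\boldsymbol{\rho}_{n+1}=\Gamma_{\mathbf{r}^j}(\boldsymbol{\rho}_n)$ started at $\boldsymbol{\rho}_1=\boldsymbol{\rho}^{k}$. Parts (2) and (3) of the fact guarantee that this iteration is monotonically nondecreasing and converges to the unique fixed point of $\Gamma_{\mathbf{r}^j}$, which (by uniqueness, stated in Section \ref{sec:system_model}) must be $\boldsymbol{\rho}^{j}$. Consequently $\boldsymbol{\rho}^{j}\geq\boldsymbol{\rho}_1=\boldsymbol{\rho}^{k}$, establishing the claim.

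There is no real obstacle here; the only subtlety is making sure the ``iteration'' argument is valid, which is why I rely on Fact \ref{rem:fixed_point} rather than trying to manipulate the pointwise inequality $\Gamma_{\mathbf{r}^j}(\boldsymbol{\rho}^k)\geq \boldsymbol{\rho}^k$ directly into $\boldsymbol{\rho}^{j}\geq\boldsymbol{\rho}^{k}$. The argument tacitly uses that both $\mathbf{r}^{j}$ and $\mathbf{r}^{k}$ lie in $\mathcal{X}$, so that the required fixed points exist and are unique.
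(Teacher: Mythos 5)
Your proof is correct and follows essentially the same route as the paper: both use the monotonicity of $\mathbf{q}(\boldsymbol{\rho},\cdot)$ in the rate argument to obtain $\Gamma_{\mathbf{r}^j}(\boldsymbol{\rho}^k)\geq\boldsymbol{\rho}^k$, then run the fixed-point iteration of Fact~\ref{rem:fixed_point} starting at $\boldsymbol{\rho}^k$ to get an increasing sequence converging to $\boldsymbol{\rho}^j$. Your version is slightly more explicit about why the map is monotone in $\mathbf{r}$ (linearity in the numerators of \eqref{eqn:load_i}), but the argument is the same.
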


\begin{proof}Assume that $\boldsymbol{\rho}^k=\mathbf{q}(\boldsymbol{\rho}^k,\mathbf{r}^k)=\Gamma_{\mathbf{r}^k}(\boldsymbol{\rho}^k)$ and consider a demand vector $\mathbf{r}^j\geq\mathbf{r}^k$. Then we have from \eqref{eqn:load_mapping} that $\boldsymbol{\rho}^k=\mathbf{q}(\boldsymbol{\rho}^k,\mathbf{r}^k)\leq\mathbf{q}(\boldsymbol{\rho}^k,\mathbf{r}^j)$. Starting with the vector $\boldsymbol{\rho}^k$, we employ the standard fixed point iteration (see Fact \ref{rem:fixed_point}) and obtain a monotonically increasing sequence $(\boldsymbol{\rho}_{n+1}=\mathbf{q}(\boldsymbol{\rho}_{n},\mathbf{r}^j))_{n\in\mathbb{N}}$, with $\boldsymbol{\rho}_{1}=\boldsymbol{\rho}^k$, which converges to the uniquely existing fixed point $\boldsymbol{\rho}^j=\mathbf{q}(\boldsymbol{\rho}^j,\mathbf{r}^j)=\Gamma_{\mathbf{r}^j}(\boldsymbol{\rho}^j)\geq\boldsymbol{\rho}^k$.
\end{proof}
%
%\begin{corollary}
%An immediate consequence of Lemma \ref{lemma:one} is that if $\exists\mathbf{r}^{\text{min}}\in\mathcal{X}$ such that $(\forall\mathbf{r}\in\mathcal{X})$ $\mathbf{r}^{\text{min}}\leq\mathbf{r}$, then $(\forall\mathbf{r}\in\mathcal{X})$ $\Gamma_{\mathbf{r}^{\text{min}}}(\boldsymbol{\rho}^{\text{min}})=\boldsymbol{\rho}^{\text{min}} \leq \boldsymbol{\rho}=\Gamma_{\mathbf{r}}(\boldsymbol{\rho})\leq \mathbf{1}$ and the set of fixed points is given by $\mathcal{Y}=\left\{\boldsymbol{\rho}\in\mathbb{R}^{M}_{++}|(\forall\mathbf{r}\in\mathcal{X})\boldsymbol{\rho}^{\text{min}}\leq\boldsymbol{\rho}=\Gamma_{\mathbf{r}}(\boldsymbol{\rho})\leq \mathbf{1}\right\}$.
%\end{corollary}
%

\begin{lemma}
The feasible rate region $\mathcal{X}$ is bounded.
\label{lemma:two}
\end{lemma}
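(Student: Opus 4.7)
The plan is to exploit the feasibility constraint $\boldsymbol{\rho}^* \leq \mathbf{1}$ together with the closed-form load expression in \eqref{eqn:load_i} to derive a uniform upper bound on each component of any feasible rate vector. Let $\mathbf{r} \in \mathcal{X}$ be arbitrary and let $\boldsymbol{\rho}^*$ denote the unique fixed point in $\mathrm{Fix}(\Gamma_{\mathbf{r}})$, which by Definition~\ref{defn:feasible_set} satisfies $\boldsymbol{\rho}^* \leq \mathbf{1}$. Fixing an arbitrary BS $i \in \mathcal{M}$, the inequality $\rho_i^* \leq 1$ combined with \eqref{eqn:load_i} immediately yields
\[
\sum_{j \in \mathcal{N}(i)} \frac{r_j}{\log\bigl(1 + \gamma_{ij}(\mathbf{p}, \boldsymbol{\rho}^*)\bigr)} \leq R.
\]

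Next, I would eliminate the dependence on $\boldsymbol{\rho}^*$ in the denominator. Inspection of \eqref{eqn:SINR} shows that $\gamma_{ij}(\mathbf{p}, \boldsymbol{\rho})$ is non-increasing in every component $\rho_k$, because the interference in the denominator is non-decreasing in $\rho_k$. Consequently,
\[
\gamma_{ij}(\mathbf{p}, \boldsymbol{\rho}^*) \leq \gamma_{ij}(\mathbf{p}, \mathbf{0}) = \frac{p_i G_{i,j}}{\sigma^2},
\]
and hence $\log\bigl(1 + \gamma_{ij}(\mathbf{p}, \boldsymbol{\rho}^*)\bigr) \leq \log(1 + p_i G_{i,j}/\sigma^2)$. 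Substituting this upper bound into each denominator of the sum above (which is a sum of positive terms) and keeping only the $j$th term gives
\[
r_j \leq R \cdot \log\!\bigl(1 + p_i G_{i,j}/\sigma^2\bigr) \qquad \text{for every } j \in \mathcal{N}(i),
\]
a finite constant determined entirely by the fixed network parameters $\mathbf{p}$, $G$, $\sigma^2$, and $R$.

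To finish, since by the system model every TP $j \in \mathcal{N}$ is served by some BS $i(j) \in \mathcal{M}$, each coordinate $r_j$ of any $\mathbf{r} \in \mathcal{X}$ is bounded from above by a constant independent of $\mathbf{r}$; combined with the lower bound $\mathbf{r} \geq \mathbf{r}_{\min}$ inherited from $\mathcal{X}_{\min}$, this places $\mathcal{X}$ inside a compact box in $\mathbb{R}^N_{++}$, establishing boundedness. I do not anticipate any essential obstacle; the only step requiring explicit (but immediate) justification is the monotonicity of $\gamma_{ij}$ in the cell-loads, which follows directly from \eqref{eqn:SINR}.
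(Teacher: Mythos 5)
Your proof is correct and rests on the same key inequality as the paper's, namely $\gamma_{ij}(\mathbf{p},\boldsymbol{\rho}^{\ast})\leq\gamma_{ij}(\mathbf{p},\mathbf{0})$ combined with the feasibility constraint $\boldsymbol{\rho}^{\ast}\leq\mathbf{1}$; you simply phrase it directly, extracting the explicit componentwise bound $r_j \leq R\,B\,\log\bigl(1+p_i G_{i,j}/\sigma^2\bigr)$ (note the bandwidth factor $B$ from the $\tfrac{1}{RB}$ normalization in \eqref{eqn:load_i}, which you dropped), whereas the paper argues by contradiction with an unbounded sequence of feasible rate vectors. The two arguments are essentially the same, with your direct version being marginally cleaner.
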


\begin{proof}
The set $\mathcal{X}$ is clearly bounded from below. Suppose the set $\mathcal{X}$ is unbounded from above, then there exists at least one unbounded sequence $(\mathbf{r}_{n})_{n\in\mathbb{N}}\subset\mathcal{X}$. This implies that at least one component of the vector $\mathbf{r}_{n}$ grows unbounded. Let us denote the sequence of this component by $(r_{l,n})_{n\in\mathbb{N}}$ and let the corresponding TP be  connected to BS $i\in\mathcal{M}$. Every unbounded sequence has an increasing subsequence that diverges to $+ \infty$. Let us extract such a subsequence and denote  it by $(r_{l,k})_{k\in\mathbb{K}\subset\mathbb{N}}$. Likewise, denote by $(\rho_{i,k})_{k\in\mathbb{K}\subset\mathbb{N}}$ the $i$th component of the subsequence $(\boldsymbol{\rho}_{k})_{k\in\mathbb{K}\subset\mathbb{N}}$, where $\boldsymbol{\rho}_k \in \text{Fix}(\Gamma_{\boldsymbol{r}_k})$. It can be easily verified that, for fixed $\mathbf{p}$ and bandwidth resources (RB) in $\eqref{eqn:load_i}$, we have that $\rho_{i,k}=\frac{1}{RB}\sum_{j\in\mathcal{N}(i)} \frac{r_{j,k}}{\log(1+\gamma_{ij}(\mathbf{p},\boldsymbol{\rho}_k))}\geq\frac{1}{RB}\sum_{j\in\mathcal{N}(i)}\frac{r_{j,k}}{\log(1+\gamma_{ij}(\mathbf{p},\boldsymbol{0}))}\geq\frac{1}{RB}\frac{r_{l,k}}{\log(1+\gamma_{ij}(\mathbf{p},\boldsymbol{0}))}>0$, where $\rho_{i,k}$ and $r_{j,k}$ are the $i$th and $j$th component of vectors $\boldsymbol{\rho}_k$ and $\mathbf{r}_{k}$, respectively. Now, note that the lower bound $\frac{1}{RB}\frac{r_{l,k}}{\log(1+\gamma_{ij}(\mathbf{p},\boldsymbol{0}))}$ grows unbounded as $r_{l,k}\rightarrow\infty$, which in particular implies that $\lim_{k\rightarrow\infty}\rho_{i,k}=\infty$. However, this contradicts the fact that, by our definition of feasibility,  $(\forall k\in\mathbb{K})$ $0\leq\rho_{i,k}\leq 1$. Therefore, we conclude that the feasible set $\mathcal{X}$ does not contain any unbounded sequence, so $\mathcal{X}$ is bounded.
\end{proof}

Now we proceed to characterize the fixed point solution of \eqref{eqn:load_mapping} by exploiting its connection with the conditional eigenvalue problem based on concave Perron-Frobenius theory \cite{krause01}. First, we prove a general result which is valid for positive concave mappings\footnote{The same result also holds for a more general class of standard interference mappings but we do not consider these in this study \cite[Proposition~2]{renato20172}.} arising in various contexts beyond the scope of this paper. 
\begin{proposition}
\textit{ Let $\Gamma:\mathbb{R}_{+}^{M}\rightarrow\mathbb{R}_{++}^{M}$ be an arbitrary positive concave mapping and
consider the following conditional eigenvalue problem which always has a unique solution $(\boldsymbol{\rho}^{\ast},\lambda^{\ast})\in\mathbb{R}^{M}_{++} \times \mathbb{R}_{++}$ \cite{krause01}:}
\begin{problem}
\textit{Find $(\boldsymbol{\rho}^{\ast},\lambda^{\ast})\in\mathbb{R}^{M}_{+} \times \mathbb{R}_{+}$ such that $\Gamma(\boldsymbol{\rho}^{\ast})=\lambda^{\ast}\boldsymbol{\rho}^{\ast}$ and $\left\|\boldsymbol{\rho}^{\ast}\right\|_{\infty}=1$.} 
\label{problem:eig1}
\end{problem}
\textit{Then $\text{Fix}(\Gamma)\neq\emptyset$ and $\boldsymbol{1}\geq\boldsymbol{\rho}'\in \text{Fix}(\Gamma)$ if and only if $0 <\lambda^{\ast}\leq 1$.} 
\label{proposition:one}
\end{proposition}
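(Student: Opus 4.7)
The plan is to exploit the sub-homogeneity property of positive concave mappings, namely that $\Gamma(t\boldsymbol{\rho})\geq t\,\Gamma(\boldsymbol{\rho})$ for all $t\in[0,1]$ and all $\boldsymbol{\rho}\in\mathbb{R}_+^M$. This follows from concavity applied to the convex combination $t\boldsymbol{\rho}+(1-t)\boldsymbol{0}$ together with $\Gamma(\boldsymbol{0})\geq\boldsymbol{0}$, which holds because $\Gamma$ takes values in $\mathbb{R}_{++}^M$. I will also rely on the fact, already invoked in the paper, that positive concave mappings are monotone (being a subclass of standard interference mappings \cite{yates95}), and on the observation that $\boldsymbol{\rho}^{\ast}\in\mathbb{R}_{++}^M$ admits at least one coordinate equal to $1$ since $\|\boldsymbol{\rho}^{\ast}\|_\infty=1$.

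For the direction $(\Leftarrow)$, suppose $0<\lambda^{\ast}\leq 1$. Then $\Gamma(\boldsymbol{\rho}^{\ast})=\lambda^{\ast}\boldsymbol{\rho}^{\ast}\leq\boldsymbol{\rho}^{\ast}$, so Fact \ref{rem:fixed_point}(1) immediately yields $\text{Fix}(\Gamma)\neq\emptyset$. Initialising the iteration of Fact \ref{rem:fixed_point}(2) at $\boldsymbol{\rho}_1=\boldsymbol{\rho}^{\ast}$ produces, by Fact \ref{rem:fixed_point}(3), a monotonically decreasing sequence whose limit is the unique fixed point $\boldsymbol{\rho}'$. Consequently $\boldsymbol{\rho}'\leq\boldsymbol{\rho}^{\ast}\leq\boldsymbol{1}$, where the last inequality is componentwise and uses only $\|\boldsymbol{\rho}^{\ast}\|_\infty=1$.

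For the direction $(\Rightarrow)$, assume $\boldsymbol{\rho}'\in\text{Fix}(\Gamma)$ with $\boldsymbol{\rho}'\leq\boldsymbol{1}$; note that $\boldsymbol{\rho}'=\Gamma(\boldsymbol{\rho}')\in\mathbb{R}_{++}^M$ is coordinatewise strictly positive. I would define
\begin{equation*}
t^{\ast}:=\sup\{\,t>0\mid t\boldsymbol{\rho}^{\ast}\leq\boldsymbol{\rho}'\,\},
\end{equation*}
and observe that the supremum is attained (explicitly, $t^{\ast}=\min_i \rho'_i/\rho^{\ast}_i$). Because $\boldsymbol{\rho}^{\ast}$ has a coordinate equal to $1$ while the corresponding coordinate of $\boldsymbol{\rho}'$ is at most $1$, we have $t^{\ast}\leq 1$. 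Applying sub-homogeneity at $t=t^{\ast}\in(0,1]$, then monotonicity of $\Gamma$, together with $t^{\ast}\boldsymbol{\rho}^{\ast}\leq\boldsymbol{\rho}'$, yields
\begin{equation*}
\boldsymbol{\rho}'=\Gamma(\boldsymbol{\rho}')\geq\Gamma(t^{\ast}\boldsymbol{\rho}^{\ast})\geq t^{\ast}\,\Gamma(\boldsymbol{\rho}^{\ast})=t^{\ast}\lambda^{\ast}\boldsymbol{\rho}^{\ast}.
\end{equation*}
If $\lambda^{\ast}>1$ held, then $t^{\ast}\lambda^{\ast}>t^{\ast}$ would also satisfy the defining constraint, contradicting the maximality of $t^{\ast}$. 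Hence $\lambda^{\ast}\leq 1$, while $\lambda^{\ast}>0$ is already guaranteed by the statement of Problem \ref{problem:eig1}.

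The main obstacle I anticipate is the Perron--Frobenius-style scaling step: one has to jointly use $\|\boldsymbol{\rho}^{\ast}\|_\infty=1$ and $\boldsymbol{\rho}'\leq\boldsymbol{1}$ to pin down $t^{\ast}\leq 1$, and then ensure that $t^{\ast}\in(0,1]$ makes the sub-homogeneity inequality applicable. Once this is in place, the chain of inequalities above delivers a strict improvement on $t^{\ast}$ whenever $\lambda^{\ast}>1$, which is the contradiction that closes the argument; the converse direction is essentially a direct application of Fact \ref{rem:fixed_point}.
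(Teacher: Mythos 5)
Your proof is correct, and while your $(\Leftarrow)$ direction coincides with the paper's (both reduce to $\Gamma(\boldsymbol{\rho}^{\ast})=\lambda^{\ast}\boldsymbol{\rho}^{\ast}\leq\boldsymbol{\rho}^{\ast}$ and the decreasing fixed-point iteration of Fact \ref{rem:fixed_point}, giving $\boldsymbol{\rho}'\leq\boldsymbol{\rho}^{\ast}\leq\boldsymbol{1}$), your $(\Rightarrow)$ direction is genuinely different. The paper views the fixed point $\boldsymbol{\rho}'$ as the solution of a second conditional eigenvalue problem at norm level $c=\|\boldsymbol{\rho}'\|_{\infty}\leq 1$ with eigenvalue $\lambda'=1$, and then invokes the eigenvalue-comparison result \cite[Lemma 3.1.3]{nuzman07} ($\lambda^{\ast}>\lambda'\Rightarrow\boldsymbol{\rho}'>\boldsymbol{\rho}^{\ast}$) to contradict $\|\boldsymbol{\rho}^{\ast}\|_{\infty}=1$ via monotonicity of the $l_\infty$ norm. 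You instead give a self-contained concave Perron--Frobenius scaling argument: the maximal factor $t^{\ast}=\min_i \rho'_i/\rho^{\ast}_i\in(0,1]$, sub-homogeneity $\Gamma(t\boldsymbol{\rho})\geq t\Gamma(\boldsymbol{\rho})$ (correctly derived from concavity and $\Gamma(\boldsymbol{0})>\boldsymbol{0}$), and monotonicity of $\Gamma$ yield $\boldsymbol{\rho}'\geq t^{\ast}\lambda^{\ast}\boldsymbol{\rho}^{\ast}$, so $\lambda^{\ast}>1$ would contradict the maximality of $t^{\ast}$. Your route buys elementarity and independence from the external lemma (you essentially re-prove the piece of \cite{krause01}/\cite{nuzman07} machinery that is needed, using only properties the paper already grants positive concave mappings, namely monotonicity via \cite{Cavalcante2016}); the paper's route is shorter given the cited results and, as a by-product of the comparison lemma, yields the stronger componentwise ordering $\boldsymbol{\rho}'>\boldsymbol{\rho}^{\ast}$ under the hypothetical $\lambda^{\ast}>1$. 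One small point: make explicit (as you do implicitly) that monotonicity of $\Gamma$ is not part of the definition of concavity but follows from positivity plus concavity on all of $\mathbb{R}_{+}^{M}$, or cite it, since your chain of inequalities relies on it.
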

\begin{proof}
First note that $0 <\lambda^{\ast}$ since $(\forall \boldsymbol{\rho}\in\mathbb{R}_{+}^{M})$ $\Gamma(\boldsymbol{\rho}) \in \mathbb{R}^{M}_{++}$. If $\lambda^{\ast}\leq 1$, then $\Gamma(\boldsymbol{\rho}^{\ast})\leq\boldsymbol{\rho}^{\ast}$ and, by Fact 1.1, it follows that $\text{Fix}(\Gamma)\neq\emptyset$ and the fixed point $\boldsymbol{\rho}'\in \text{Fix}(\Gamma)$ exists and satisfies $\boldsymbol{0} \leq\boldsymbol{\rho}'\leq \boldsymbol{\rho}^{\ast}\leq \boldsymbol{1}$. This shows one direction of the claim.

Conversely, assume $\boldsymbol{\rho}'\in \text{Fix}(\Gamma)$ and $\boldsymbol{0}\leq \boldsymbol{\rho}' \leq \boldsymbol{1}$. Let $\left\|\boldsymbol{\rho}'\right\|_{\infty}=c$, where $0 < c \leq 1$, and $\lambda'=1$. Then $(\boldsymbol{\rho}',\lambda')$ solves the following conditional eigenvalue problem:
%First note that by definition $\Gamma_{\mathbf{r}}(\boldsymbol{\rho}^{\ast})=\lambda^{\ast}\boldsymbol{\rho}^{\ast}$ $\Leftrightarrow$ $\mathbf{q}(\boldsymbol{\rho}^{\ast},\frac{1}{\lambda^{\ast}}\mathbf{r})=\boldsymbol{\rho}^{\ast}$. Suppose $\lambda^{\ast}\leq 1$, then we have that $\frac{1}{\lambda^{\ast}}\mathbf{r}\geq\mathbf{r}$. Let $\overline{\mathbf{r}}=\frac{1}{\lambda^{\ast}}\mathbf{r}$, then it can be seen that $\boldsymbol{\rho}^{\ast}=\mathbf{q}(\boldsymbol{\rho}^{\ast},\overline{\mathbf{r}})$ and $\boldsymbol{\rho}^{\ast}$ is the fixed point of the mapping $\Gamma_{\overline{\mathbf{r}}}$ with $\left\|\boldsymbol{\rho}^{\ast}\right\|_{\infty}=1$. Observing that $\Gamma_{\mathbf{r}}(\boldsymbol{\rho}^{\ast})\leq\Gamma_{\overline{\mathbf{r}}}(\boldsymbol{\rho}^{\ast})=\boldsymbol{\rho}^{\ast}$, it follows from Fact \ref{rem:fixed_point} that $\boldsymbol{0}\leq \text{Fix}(\Gamma_{\mathbf{r}})\leq \boldsymbol{\rho}^{\ast} \leq \boldsymbol{1}$. Conversely, denote the fixed point of $\Gamma_{\mathbf{r}}$ by $\boldsymbol{\rho}'$ and suppose that $\boldsymbol{0}\leq \boldsymbol{\rho}' \leq \boldsymbol{1}$. Then $(\boldsymbol{\rho}',1)$ solves the following eigenvalue problem:
\begin{problem}
\textit{Find $(\boldsymbol{\rho},\lambda)\in\mathbb{R}^{M}_{+} \times \mathbb{R}_{+}$ 
such that $\Gamma(\boldsymbol{\rho})=\lambda\boldsymbol{\rho}$ and $\left\|\boldsymbol{\rho}\right\|_{\infty}=c$.}
\label{problem:eig2}
\end{problem}
Now consider Problem \ref{problem:eig1} with this mapping $\Gamma$, and denote its solution by $(\boldsymbol{\rho}^{\ast},\lambda^{\ast})\in\mathbb{R}^{M}_{++} \times \mathbb{R}_{++}$. To obtain a contradiction suppose that $\lambda^{\ast}>1=\lambda'$. From \cite[Lemma 3.1.3]{nuzman07}, and the fact that positive concave mappings are a special case of standard interference mappings \cite{Cavalcante2016}, it follows that $\lambda^{\ast}>\lambda'\Rightarrow\boldsymbol{\rho}'>\boldsymbol{\rho}^{\ast}$. 
From the monotonicity of the $l_{\infty}$ norm, we obtain the contradiction that  $c=\left\|\boldsymbol{\rho}'\right\|_{\infty}>\left\|\boldsymbol{\rho}^{\ast}\right\|_{\infty}=1$. This concludes the second direction of the proof.
%, where $\left\|\boldsymbol{\rho}^{\ast}\right\|_{\infty}=c^{\ast}=1$. 
%From \cite[Lemma 3.1.3]{nuzman07}, and the fact that positive concave mappings are a special case of standard interference mappings \cite{Cavalcante2016}, it follows that $c^{\ast}\geq c' \Rightarrow \lambda^{\ast}\leq\lambda'=1$. 
%$\boldsymbol{\rho}'>\boldsymbol{\rho}^{\ast}$. From the monotonicity of the $l_{\infty}$ norm, we obtain the contradiction that  $\left\|\boldsymbol{\rho}'\right\|_{\infty}>\left\|\boldsymbol{\rho}^{\ast}\right\|_{\infty}=1$. This concludes the second direction of the proof.  
\end{proof}
%corollary
%
The immediate consequence of Proposition \ref{proposition:one}, which we will utilize in Proposition 2, is that a rate vector $\mathbf{r}\in\mathbb{R}^{N}_{++}$ is feasible if and only if the solution of the conditional eigenvalue Problem \ref{problem:eig1} with the positive concave mapping $\Gamma_{\mathbf{r}}$ satisfies $\lambda^{\ast}\leq 1$. We also obtain the following result from Proposition \ref{proposition:one}:
\begin{cor}\label{cor:correlary_one} 
\textit{Let $\lambda^{\ast}$ be the eigenvalue that solves Problem \ref{problem:eig1} in Proposition \ref{proposition:one} with the positive concave mapping $\Gamma_{\mathbf{r}}$. As a function of $\mathbf{r}\in\mathcal{X}$, $\lambda^{\ast}(\mathbf{r})$ is bounded away from 0, i.e., $(\exists \epsilon >0)$ $(\forall \mathbf{r} \in \mathcal{X})$ $\lambda^{\ast}(\mathbf{r})\geq\epsilon $}.
\end{cor}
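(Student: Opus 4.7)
The plan is to extract a uniform lower bound for $\lambda^*(\mathbf{r})$ directly from the conditional eigenvalue equation $\Gamma_{\mathbf{r}}(\boldsymbol{\rho}^*)=\lambda^*\boldsymbol{\rho}^*$ combined with the explicit form of $\Gamma_{\mathbf{r}}$ in \eqref{eqn:load_i}. First, I would fix an arbitrary $\mathbf{r}\in\mathcal{X}$ and the corresponding solution $(\boldsymbol{\rho}^*,\lambda^*)\in\mathbb{R}_{++}^{M}\times\mathbb{R}_{++}$ of Problem \ref{problem:eig1}. Since $\|\boldsymbol{\rho}^*\|_{\infty}=1$, there exists an index $i_{0}\in\mathcal{M}$ (depending on $\mathbf{r}$) with $\rho^*_{i_{0}}=1$, and reading off the $i_{0}$th coordinate of the eigenvalue equation gives
\begin{equation*}
\lambda^{*}\;=\;[\Gamma_{\mathbf{r}}(\boldsymbol{\rho}^{*})]_{i_{0}}\;=\;\frac{1}{RB}\sum_{j\in\mathcal{N}(i_{0})}\frac{r_{j}}{\log\bigl(1+\gamma_{i_{0}j}(\mathbf{p},\boldsymbol{\rho}^{*})\bigr)}.
\end{equation*}

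Next, to eliminate the dependence on the unknown eigenvector and on the particular $\mathbf{r}$, I would replace each factor by an $\mathbf{r}$-independent bound. Using \eqref{eqn:SINR} together with $\boldsymbol{\rho}^{*}\geq\mathbf{0}$ yields $\gamma_{i_{0}j}(\mathbf{p},\boldsymbol{\rho}^{*})\leq\gamma_{i_{0}j}(\mathbf{p},\mathbf{0})$ and hence $\log(1+\gamma_{i_{0}j}(\mathbf{p},\boldsymbol{\rho}^{*}))\leq\log(1+\gamma_{i_{0}j}(\mathbf{p},\mathbf{0}))$; and the inclusion $\mathcal{X}\subseteq\mathcal{X}_{\min}$ yields $r_{j}\geq r_{\min,j}$ for every $j$. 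Substituting these two one-sided estimates into the preceding equality gives
\begin{equation*}
\lambda^{*}\;\geq\;\frac{1}{RB}\sum_{j\in\mathcal{N}(i_{0})}\frac{r_{\min,j}}{\log\bigl(1+\gamma_{i_{0}j}(\mathbf{p},\mathbf{0})\bigr)}.
\end{equation*}

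Finally, since the extremal index $i_{0}$ varies with $\mathbf{r}$ but lives in the \emph{finite} set $\mathcal{M}$, I would set
\begin{equation*}
\epsilon\;:=\;\min_{i\in\mathcal{M}}\;\frac{1}{RB}\sum_{j\in\mathcal{N}(i)}\frac{r_{\min,j}}{\log\bigl(1+\gamma_{ij}(\mathbf{p},\mathbf{0})\bigr)},
\end{equation*}
which does not depend on $\mathbf{r}$ and is strictly positive: $\mathbf{r}_{\min}>\mathbf{0}$ makes each numerator positive, and the fact that $\Gamma_{\mathbf{r}}$ takes values in $\mathbb{R}_{++}^{M}$ forces $\mathcal{N}(i)\neq\emptyset$ for every $i$, so every inner sum is nonempty. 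The main obstacle I anticipate is precisely this $\mathbf{r}$-dependence of $i_{0}$, since a priori the extremal coordinate could roam over $\mathcal{M}$ as $\mathbf{r}$ varies across $\mathcal{X}$; taking a minimum over the finite set $\mathcal{M}$ resolves it cleanly and produces the required uniform $\epsilon>0$.
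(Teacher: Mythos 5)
Your proof is correct, but it takes a genuinely different route from the paper. The paper argues by contradiction along a sequence: if $\lambda^{\ast}_{n}\to 0$, then the linearity of the load mapping in $\mathbf{r}$ turns the eigenvalue equation $\Gamma_{\mathbf{r}_{n}}(\boldsymbol{\rho}^{\ast}_{n})=\lambda^{\ast}_{n}\boldsymbol{\rho}^{\ast}_{n}$ into the fixed-point equation $\mathbf{q}(\boldsymbol{\rho}^{\ast}_{n},\tfrac{1}{\lambda^{\ast}_{n}}\mathbf{r}_{n})=\boldsymbol{\rho}^{\ast}_{n}$ with $\boldsymbol{\rho}^{\ast}_{n}\leq\mathbf{1}$, so the rescaled vectors $\tfrac{1}{\lambda^{\ast}_{n}}\mathbf{r}_{n}$ are feasible yet blow up, contradicting the boundedness of $\mathcal{X}$ (Lemma~\ref{lemma:two}). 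You instead extract the coordinate $i_{0}$ where $\rho^{\ast}_{i_{0}}=1$, read off $\lambda^{\ast}=[\Gamma_{\mathbf{r}}(\boldsymbol{\rho}^{\ast})]_{i_{0}}$, and bound the load expression from below using $\log(1+\gamma_{i_{0}j}(\mathbf{p},\boldsymbol{\rho}^{\ast}))\leq\log(1+\gamma_{i_{0}j}(\mathbf{p},\mathbf{0}))$ and $\mathbf{r}\geq\mathbf{r}_{\min}$, then uniformize over the finite index set $\mathcal{M}$. All steps check out: the existence of $i_{0}$ follows from $\|\boldsymbol{\rho}^{\ast}\|_{\infty}=1$ with $\boldsymbol{\rho}^{\ast}\in\mathbb{R}^{M}_{++}$, the SINR monotonicity is exactly the inequality the paper itself uses in Lemma~\ref{lemma:two}, and your observation that positivity of $\mathbf{q}$ forces $\mathcal{N}(i)\neq\emptyset$ is a fair reading of the model. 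What your version buys is a direct, constructive proof with an explicit value of $\epsilon$, bypassing both the contradiction and the dependence on Lemma~\ref{lemma:two}; what it costs is generality, since it leans on the explicit formula \eqref{eqn:load_i} for the load mapping, whereas the paper's argument only needs linearity in $\mathbf{r}$ plus boundedness of the feasible set and so transfers more readily to other positive concave interference mappings.
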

\begin{proof} 
Consider a sequence $(\mathbf{r}_{n})_{n\in\mathbb{N}}\subset\mathcal{X}$ and sequence of solutions $(\lambda^{\ast}_{n},\boldsymbol{\rho}^{\ast}_{n})_{n\in\mathbb{N}}$ to Problem \ref{problem:eig1}, where for convenience we define $\lambda^{\ast}_{n}:=\lambda^{\ast}(\mathbf{r}_{n})$ in the proof. Now suppose that $(\forall n\in\mathbb{N})$ $0 <\lambda^{\ast}_{n}\leq 1$ and there exists a sequence $(\lambda^{\ast}_{n})_{n\in\mathbb{N}}$ such that $\lambda^{\ast}_{n}\rightarrow 0$. 
%To see this, suppose that $(\forall \epsilon \geq 0)(\exists \mathbf{r}_{\epsilon}\in\mathbb{R}^{N}_{++})$$0\leq\lambda_{\epsilon}^{\ast}\leq\epsilon$, where $\lambda_{\epsilon}^{\ast}$ is the eigenvalue solution as a function of $\mathbf{r}_{\epsilon}$. Pick a particular $\epsilon\leq 1$. 
By definition, for a given $n\in\mathbb{N}$, if $(\lambda^{\ast}_{n},\boldsymbol{\rho}^{\ast}_{n})$ solves Problem \ref{problem:eig1}, then $\Gamma_{\mathbf{r}_{n}}(\boldsymbol{\rho}^{\ast}_{n})=\lambda^{\ast}_{n}\boldsymbol{\rho}^{\ast}_{n}$ $\Leftrightarrow$ $\mathbf{q}(\boldsymbol{\rho}^{\ast}_{n},\frac{1}{\lambda_{n}^{\ast}}\mathbf{r}_{n})=\boldsymbol{\rho}^{\ast}_{n}$ and $\left\|\boldsymbol{\rho}^{\ast}_{n}\right\|_{\infty}=1$. Let $\overline{\mathbf{r}}_{n}=\frac{1}{\lambda_{n}^{\ast}}\mathbf{r}_{n}\geq \frac{1}{\lambda_{n}^{\ast}}\mathbf{r}_{\text{min}}$. It can be seen that $\boldsymbol{\rho}^{\ast}_{n}=\mathbf{q}(\boldsymbol{\rho}^{\ast}_{n},\overline{\mathbf{r}}_{n})$, so $\boldsymbol{\rho}^{\ast}_{n}$ is the fixed point of the mapping $\Gamma_{\overline{\mathbf{r}}_{n}}$, which implies that $\overline{\mathbf{r}}_{n}$ is feasible since $\boldsymbol{\rho}^{\ast}_{n}\leq\boldsymbol{1}$. But note that as $\lambda^{\ast}_{n}\rightarrow 0$, $\left\|\overline{\mathbf{r}}_{n}\right\|_{\infty}\rightarrow\infty$ which contradicts Lemma \ref{lemma:two}. So $\exists \epsilon >0$ such that $(\forall \mathbf{r} \in \mathcal{X})$ $\epsilon\leq\lambda^{\ast}(\mathbf{r})\leq 1$.
\label{corollary:one}  
\end{proof}

In what follows, we denote by $\mathbf{f}:\mathcal{X} \rightarrow \mathcal{Y}:\mathbf{r}\mapsto\boldsymbol{\rho}\in\text{Fix}(\Gamma_\mathbf{r})$ the function that maps each $\mathbf{r}$ to the unique fixed point of the mapping $\Gamma_{\mathbf{r}}$ in \eqref{eqn:load_mapping}. We are now in a position to present the main results of this section which enable us to apply robust and optimal approximation methods in the next section. 
\begin{proposition}
\textit{The feasible rate region $\mathcal{X}\subset\mathbb{R}^{N}_{++}$ is compact.}
%\left\{\mathbf{r}\in\mathbb{R}^{N}_{++}|\mathbf{q}(\boldsymbol{\rho}^{\ast},\mathbf{r})-\lambda^{\ast}\boldsymbol{\rho}^{\ast}=\boldsymbol{0},\exists\epsilon>0:0<\epsilon\leq\lambda^{\ast}\leq 1 ,\linebreak \left\|\boldsymbol{\rho}^{\ast}\right\|_{\infty}=1\right\}$. Then $\mathcal{X}$ is compact.
\label{proposition:two}
\end{proposition}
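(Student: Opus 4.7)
The plan is to invoke Heine--Borel: Lemma \ref{lemma:two} already gives boundedness, so it suffices to show that $\mathcal{X}$ is closed in $\mathbb{R}^{N}$. Take an arbitrary sequence $(\mathbf{r}_n)_{n\in\mathbb{N}}\subset\mathcal{X}$ with $\mathbf{r}_n\to\mathbf{r}$, and argue $\mathbf{r}\in\mathcal{X}$. Since $\mathbf{r}_n\geq\mathbf{r}_{\text{min}}>\mathbf{0}$ for every $n$, componentwise limits give $\mathbf{r}\geq\mathbf{r}_{\text{min}}$, so $\mathbf{r}\in\mathcal{X}_{\text{min}}\subset\mathbb{R}_{++}^{N}$. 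The remaining task is to show $\mathbf{r}\in\mathcal{X}_{f}$, and here the cleanest route is via the conditional eigenvalue characterization in Proposition \ref{proposition:one}: I only need to exhibit a solution $(\boldsymbol{\rho}^{\ast},\lambda^{\ast})$ of Problem \ref{problem:eig1} for the mapping $\Gamma_{\mathbf{r}}$ with $\lambda^{\ast}\leq 1$.

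To produce this solution, for each $n$ let $(\boldsymbol{\rho}^{\ast}_{n},\lambda^{\ast}_{n})\in\mathbb{R}_{++}^{M}\times\mathbb{R}_{++}$ be the unique solution of Problem \ref{problem:eig1} applied to $\Gamma_{\mathbf{r}_{n}}$. By feasibility of $\mathbf{r}_n$ together with Proposition \ref{proposition:one} and Corollary \ref{cor:correlary_one}, we have $\lambda^{\ast}_{n}\in[\epsilon,1]$ for some $\epsilon>0$ independent of $n$, while $\boldsymbol{\rho}^{\ast}_{n}\in\{\boldsymbol{\rho}\in[0,1]^{M}:\|\boldsymbol{\rho}\|_{\infty}=1\}$. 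The pair $(\boldsymbol{\rho}^{\ast}_{n},\lambda^{\ast}_{n})$ thus lies in a compact subset of $\mathbb{R}^{M}\times\mathbb{R}$, so after passing to a subsequence $(\boldsymbol{\rho}^{\ast}_{k},\lambda^{\ast}_{k})\to(\boldsymbol{\rho}^{\ast},\lambda^{\ast})$ with $\|\boldsymbol{\rho}^{\ast}\|_{\infty}=1$ and $\lambda^{\ast}\in[\epsilon,1]$. Joint continuity of $\mathbf{q}$ in $(\boldsymbol{\rho},\mathbf{r})$ on $\mathbb{R}_{+}^{M}\times\mathbb{R}_{++}^{N}$ (immediate from \eqref{eqn:load_i} since $\mathbf{r}\geq\mathbf{r}_{\text{min}}>\mathbf{0}$ keeps the denominators bounded away from zero) then lets me pass to the limit in $\Gamma_{\mathbf{r}_{k}}(\boldsymbol{\rho}^{\ast}_{k})=\lambda^{\ast}_{k}\boldsymbol{\rho}^{\ast}_{k}$ to obtain $\Gamma_{\mathbf{r}}(\boldsymbol{\rho}^{\ast})=\lambda^{\ast}\boldsymbol{\rho}^{\ast}$. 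Because $\Gamma_{\mathbf{r}}$ takes values in $\mathbb{R}_{++}^{M}$ and $\lambda^{\ast}>0$, this forces $\boldsymbol{\rho}^{\ast}\in\mathbb{R}_{++}^{M}$, so $(\boldsymbol{\rho}^{\ast},\lambda^{\ast})$ is a genuine solution of Problem \ref{problem:eig1} for $\Gamma_{\mathbf{r}}$. Uniqueness makes it the solution, and since $\lambda^{\ast}\leq 1$, Proposition \ref{proposition:one} yields $\text{Fix}(\Gamma_{\mathbf{r}})\neq\emptyset$ with the fixed point in $[0,1]^{M}$, giving $\mathbf{r}\in\mathcal{X}_{f}\cap\mathcal{X}_{\text{min}}=\mathcal{X}$.

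The main obstacle is the limit passage in the eigenpair: a priori the cluster point $\boldsymbol{\rho}^{\ast}$ could lie on the boundary of $\mathbb{R}_{+}^{M}$, or $\lambda^{\ast}$ could collapse to zero, either of which would disqualify $(\boldsymbol{\rho}^{\ast},\lambda^{\ast})$ from solving Problem \ref{problem:eig1}. The uniform lower bound $\lambda^{\ast}_{n}\geq\epsilon$ supplied by Corollary \ref{cor:correlary_one} is precisely what rules out the second pathology, and once $\lambda^{\ast}>0$ is secured, the identity $\Gamma_{\mathbf{r}}(\boldsymbol{\rho}^{\ast})=\lambda^{\ast}\boldsymbol{\rho}^{\ast}$ forces strict positivity of $\boldsymbol{\rho}^{\ast}$ automatically. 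Everything else is standard compactness and continuity bookkeeping.
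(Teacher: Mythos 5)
Your proposal is correct and follows essentially the same route as the paper's proof: boundedness from Lemma \ref{lemma:two}, the eigenpair characterization of Proposition \ref{proposition:one} with the uniform bound $\lambda^{\ast}_{n}\in[\epsilon,1]$ from Corollary \ref{cor:correlary_one}, extraction of convergent subsequences in the compact set $\mathcal{P}\times\Lambda$, and a continuity passage to the limit in $\Gamma_{\mathbf{r}_{k}}(\boldsymbol{\rho}^{\ast}_{k})=\lambda^{\ast}_{k}\boldsymbol{\rho}^{\ast}_{k}$. The only (cosmetic) differences are that you phrase the conclusion as closed-plus-bounded via Heine--Borel rather than sequential compactness, and you explicitly check the two minor points the paper leaves implicit (that the limit satisfies $\mathbf{r}\geq\mathbf{r}_{\text{min}}$ and that the limiting eigenvector is strictly positive).
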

%proof
\begin{proof}
We have shown by Proposition \ref{proposition:one} that a rate vector $\mathbf{r}\in\mathcal{X}$ is feasible if and only if the solution to Problem \ref{problem:eig1}, with the positive concave mapping $\Gamma_{\mathbf{r}}$, satisfies $(\exists\epsilon>0)$ $\epsilon\leq\lambda^{\ast}\leq 1$. Now, let $\Lambda= \left\{\lambda\in\mathbb{R}_{++}|\epsilon\leq\lambda\leq 1 \right\}$ and $\mathcal{P}=\left\{\boldsymbol{\rho}\in\mathbb{R}^{M}_{+}|\left\|\boldsymbol{\rho}\right\|_{\infty}=1\right\}$. Since the set $\mathcal{X}$ is bounded, every sequence of rate demand vectors has a convergent subsequence. The same holds for bounded sequences $(\boldsymbol{\rho}_n^{\ast})_{n\in\mathbb{N}} \subset \mathcal{P}$ and $(\lambda^{\ast}_n)_{n\in\mathbb{N}} \subset \Lambda$. Since $(\mathbf{r}_n)_{n \in\mathbb{N}}$ is bounded, it has a convergent subsequence $(\mathbf{r}_n)_{n\in K_{1}\subset \mathbb{N}}$ whose every subsequence is also convergent. Denote by $(\boldsymbol{\rho}_n^{\ast})_{n\in K_{1}\subset \mathbb{N}}$ and $(\lambda^{\ast}_n)_{n\in K_{1}\subset \mathbb{N}}$ the sequences of solutions to Problem \ref{problem:eig1} corresponding to $(\mathbf{r}_n)_{n\in K_{1}\subset \mathbb{N}}$. 

Since the sequence $(\boldsymbol{\rho}_n^{\ast})_{n\in K_{1}\subset \mathbb{N}}$ is bounded, we can extract a convergent subsequence $(\boldsymbol{\rho}_n^{\ast})_{n\in K_{2}\subset K_{1}}$ whose corresponding subsequence of rate vectors $(\mathbf{r}_n)_{n\in K_{2}\subset K_{1}}$ is also convergent. Furthermore, denote by $(\lambda_n^{\ast})_{n\in K_{2}\subset K_{1}}$ the bounded sequence corresponding to $(\boldsymbol{\rho}_n^{\ast})_{n\in K_{2}\subset K_{1}}$. Since $(\lambda_n^{\ast})_{n\in K_{2}\subset K_{1}}$ is bounded, we can extract a convergent subsequence $(\lambda_n^{\ast})_{n\in K_{3}\subset K_{2}}$ whose corresponding sequences of fixed points $(\boldsymbol{\rho}_n^{\ast})_{n\in K_{3}\subset K_{2}}$ and rate vectors $(\mathbf{r}_n)_{n\in K_{3}\subset K_{2}}$ are subsequences of convergent sequences and, therefore, convergent.  

Denote by $(\overline{\boldsymbol{\rho}^{\ast}},\overline{\lambda^{\ast}})$ the limit of the subsequence $(\boldsymbol{\rho}_n^{\ast},\lambda_n^{\ast})_{n\in K_{3}}$ which belongs to the compact set $\mathcal{P} \times \Lambda$. For each $n\in K_{3}$, $\mathbf{r}_n$ is feasible if and only if $\mathbf{g}(\boldsymbol{\rho}^{\ast}_{n},\mathbf{r}_n,\lambda^{\ast}_{n})=\mathbf{q}(\boldsymbol{\rho}^{\ast}_{n},\mathbf{r}_n)-\lambda^{\ast}_{n}\boldsymbol{\rho}^{\ast}_{n}=\mathbf{0}.$ Note that since $\mathbf{g}$ is continuous, $\lim_{n \in K_{3}}\mathbf{g}(\boldsymbol{\rho}^{\ast}_{n},\mathbf{r}_n,\lambda^{\ast}_{n})=\mathbf{g}(\overline{\boldsymbol{\rho}^{\ast}},\overline{\mathbf{r}},\overline{\lambda^{\ast}})=\mathbf{0}$ which implies that $\overline{\mathbf{r}}\in\mathcal{X}$. 

Since $(\mathbf{r}_n)_{n\in\mathbb{N}}$ was chosen arbitrarily, the above holds for each sequence in $\mathcal{X}$ which implies that $\mathcal{X}$ is closed. Furthermore, in finite dimensional metric spaces every bounded and closed set is compact. 
%Let $(\boldsymbol{\rho}_n^{\ast})_{n\in\mathbb{N}}\subset\mathcal{P}$ and $(\lambda^{\ast}_n)_{n\in\mathbb{N}}\subset\Lambda$ be two convergent subsequences of arbitrary sequences of the solutions of the eigenvalue problem and denote by $\overline{\boldsymbol{\rho}^{\ast}}$ and $\overline{\lambda^{\ast}}$ their limits which exist and belong to compact sets $\mathcal{P}$ and $\Lambda$, respectively.
 %For each $n\in\mathbb{N}$, we have $\mathbf{g}(\boldsymbol{\rho}^{\ast}_{n},\mathbf{r}_n,\lambda^{\ast}_{n})=\mathbf{q}(\boldsymbol{\rho}^{\ast}_{n},\mathbf{r}_n)-\lambda^{\ast}_{n}\boldsymbol{\rho}^{\ast}_{n}=0.$ Since the function $\mathbf{g}$ is continuous, it follows that $\lim_{(\boldsymbol{\rho}_n^{\ast},\lambda^{\ast}_{n},\mathbf{r}_{n})\rightarrow(\overline{\boldsymbol{\rho}^{\ast}},\overline{\lambda^{\ast}},\overline{\mathbf{r}})}\mathbf{g}(\boldsymbol{\rho}^{\ast}_{n},\mathbf{r}_n,\lambda^{\ast}_{n})=\mathbf{g}(\overline{\boldsymbol{\rho}^{\ast}},\overline{\mathbf{r}},\overline{\lambda^{\ast}})=\mathbf{0}$ from which we conclude that $\overline{\mathbf{r}}\in\mathcal{X}$. Since $(\mathbf{r}_n)_{n\in\mathbb{N}}$ is a subsequence of an arbitrary sequence, the above holds for each sequence in $\mathcal{X}$ which implies that $\mathcal{X}$ is closed. Furthermore, in finite dimensional metric spaces every bounded and closed set is compact. 
\end{proof}

\begin{theorem}\textit{
Consider the function $\mathbf{f}:\mathcal{X} \rightarrow \overline{\mathcal{Y}}:\mathbf{r}\mapsto\boldsymbol{\rho}\in\text{Fix}(\Gamma_r)$. The function $\mathbf{f}$ is uniformly continuous over the compact set $\mathcal{X}$.
\label{proposition:three}}
\end{theorem}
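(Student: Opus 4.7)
The strategy is to reduce uniform continuity to ordinary continuity via compactness. Since Proposition~\ref{proposition:two} establishes that $\mathcal{X}$ is compact, the Heine--Cantor theorem guarantees that any continuous function on $\mathcal{X}$ is automatically uniformly continuous. The task therefore reduces to proving that $\mathbf{f}$ is continuous on $\mathcal{X}$.

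To verify continuity at an arbitrary $\mathbf{r}\in\mathcal{X}$, I would take a sequence $(\mathbf{r}_n)_{n\in\mathbb{N}}\subset\mathcal{X}$ with $\mathbf{r}_n\to\mathbf{r}$ and consider the corresponding fixed points $\boldsymbol{\rho}_n := \mathbf{f}(\mathbf{r}_n)\in\text{Fix}(\Gamma_{\mathbf{r}_n})$. By Definition~\ref{defn:feasible_set}, each $\boldsymbol{\rho}_n$ lies in the compact box $[0,1]^M$, so $(\boldsymbol{\rho}_n)$ admits convergent subsequences. To conclude via a Bolzano--Weierstrass argument that the entire sequence converges to $\mathbf{f}(\mathbf{r})$, it suffices to show that every convergent subsequence shares the same limit $\mathbf{f}(\mathbf{r})$.

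Given an arbitrary convergent subsequence $\boldsymbol{\rho}_{n_k}\to\overline{\boldsymbol{\rho}}$, I would pass to the limit in the fixed-point identity $\boldsymbol{\rho}_{n_k} = \mathbf{q}(\boldsymbol{\rho}_{n_k},\mathbf{r}_{n_k})$. The mapping $\mathbf{q}$ defined through \eqref{eqn:SINR}--\eqref{eqn:load_mapping} is jointly continuous in $(\boldsymbol{\rho},\mathbf{r})$ on $[0,1]^M\times\mathcal{X}$: the SINR denominator in \eqref{eqn:SINR} is bounded below by $\sigma^2>0$, and the inclusion $\mathcal{X}\subset\mathcal{X}_{\text{min}}$ keeps all rate components bounded away from zero, so the logarithms in \eqref{eqn:load_i} are never evaluated at arguments approaching the singularity. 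Continuity then yields $\overline{\boldsymbol{\rho}} = \mathbf{q}(\overline{\boldsymbol{\rho}},\mathbf{r}) = \Gamma_{\mathbf{r}}(\overline{\boldsymbol{\rho}})$, i.e., $\overline{\boldsymbol{\rho}}\in\text{Fix}(\Gamma_{\mathbf{r}})$. Because $\Gamma_{\mathbf{r}}$ is positive concave, its fixed-point set contains at most one element, namely $\mathbf{f}(\mathbf{r})$, which exists by feasibility of $\mathbf{r}$. Hence $\overline{\boldsymbol{\rho}} = \mathbf{f}(\mathbf{r})$, all subsequential limits coincide, and $\boldsymbol{\rho}_n\to\mathbf{f}(\mathbf{r})$.

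The main obstacle is essentially bookkeeping: cleanly certifying the joint continuity of $\mathbf{q}$ at the limit point. Once it is observed that $\sigma^2>0$ together with $\mathbf{r}\geq\mathbf{r}_{\text{min}}>\mathbf{0}$ rule out any singularities in \eqref{eqn:SINR}--\eqref{eqn:load_i}, no quantitative estimates or explicit modulus of continuity are required --- uniqueness of the fixed point of the standard interference mapping $\Gamma_{\mathbf{r}}$, combined with Heine--Cantor on the compact set $\mathcal{X}$, delivers uniform continuity for free.
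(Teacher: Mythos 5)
Your proposal is correct and follows essentially the same route as the paper: extract a convergent subsequence of the fixed points in the compact set $[0,1]^M$, pass to the limit in the continuous map $(\boldsymbol{\rho},\mathbf{r})\mapsto\boldsymbol{\rho}-\mathbf{q}(\boldsymbol{\rho},\mathbf{r})$, invoke uniqueness of the fixed point of $\Gamma_{\mathbf{r}}$, and upgrade continuity to uniform continuity via compactness of $\mathcal{X}$ (Heine--Cantor). If anything, your version is slightly more careful than the paper's in making explicit the ``every subsequential limit coincides, hence the whole sequence converges'' step.
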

\begin{proof} 
%Since the set $\mathcal{X}$ is compact, it is sufficient to show that $\mathbf{f}$ is continuous on $\mathcal{X}$. Extract a convergent sequence $(\mathbf{r}_n)_{n \in\mathbb{N}}\subset\mathcal{X}$, and let the point $\mathbf{r}^{\ast}$ be its limit which, since $\mathcal{X}$ is compact, exists and belongs to $\mathcal{X}$. It suffices to show that $\lim_{\mathbf{r}_{n}\rightarrow\mathbf{r}^{\ast}}\mathbf{f}(\mathbf{r}_{n})=\mathbf{f}(\mathbf{r}^{\ast})$. Extract a convergent sequence $(\boldsymbol{\rho}_{n})_{n \in\mathbb{N}}\subset \mathcal{Y}$, of corresponding fixed points in the compact set $\mathcal{Y}$ and denote its limit by $\boldsymbol{\rho}^{\ast}\in\mathcal{Y}$. Now consider the function $\mathbf{g}(\boldsymbol{\rho},\mathbf{r})=\boldsymbol{\rho}-\mathbf{q}(\boldsymbol{\rho},\mathbf{r})$, where $\mathbf{q}$ is the load mapping, and note that this function is continuous and  $(\forall n \in \mathbb{N})$ $\mathbf{g}(\boldsymbol{\rho}_n,\mathbf{r}_n)=\mathbf{0}$. It follows from the definition of a continuous function that $\lim_{(\boldsymbol{\rho}_{n},\mathbf{r}_{n})\rightarrow(\boldsymbol{\rho}^{\ast},\mathbf{r}^{\ast})}\mathbf{g}(\boldsymbol{\rho}_n,\mathbf{r}_n)=\mathbf{g}(\boldsymbol{\rho}^{\ast},\mathbf{r}^{\ast})=\mathbf{0}$. Therefore, the limit of the subsequence $(\boldsymbol{\rho}_{n})_{n \in\mathbb{N}}$ is the unique fixed point $\mathbf{f}(\mathbf{r}^{\ast})=\boldsymbol{\rho}^{\ast}$ and $\lim_{\mathbf{r}_{n}\rightarrow\mathbf{r}^{\ast}}\mathbf{f}(\mathbf{r}_{n})=\mathbf{f}(\mathbf{r}^{\ast})$.
Since $\mathcal{X}$ is compact, every infinite sequence in $\mathcal{X}$ has a convergent subsequence whose limit is in $\mathcal{X}$. Let $(\mathbf{r}_n)_{n \in\mathbb{N}}\subset\mathcal{X}$ be an arbitrary convergent sequence, and let the point $\mathbf{r}^{\ast}\in\mathcal{X}$ be its limit. To prove that $\mathbf{f}$ is continuous, we need to show that $\lim_{n \rightarrow \infty}\mathbf{f}(\mathbf{r}_{n})=\mathbf{f}(\mathbf{r}^{\ast})$. To this end, let $(\boldsymbol{\rho}_{n})_{n \in\mathbb{N}}\subset \overline{\mathcal{Y}}$ be the corresponding sequence of $(\mathbf{f}(\mathbf{r}_{n}))_{n \in\mathbb{N}}$. Since $\overline{\mathcal{Y}}\subset[0,1]^{M}$\footnote{The set $\mathcal{Y}$ is clearly bounded. It is also closed, i.e. the closure $\overline{\mathcal{Y}}=\mathcal{Y}$, but we omit the proof for brevity. Every closed and bounded set in finite-dimensional normed spaces is compact} is compact, such a sequence has a convergent subsequence $(\boldsymbol{\rho}_{n})_{n \in K_{1} \subset \mathbb{N}}$ whose limit $\boldsymbol{\rho}^{\ast}$ exists and belongs to $\overline{\mathcal{Y}}$. The corresponding sequence of rate vectors $(\mathbf{r}_{n})_{n \in K_{1} \subset \mathbb{N}}$ is a subsequence of the convergent sequence $(\mathbf{r}_n)_{n \in\mathbb{N}}\subset\mathcal{X}$ and therefore also convergent.

Now consider the function $\mathbf{g}(\boldsymbol{\rho},\mathbf{r})=\boldsymbol{\rho}-\mathbf{q}(\boldsymbol{\rho},\mathbf{r})$, where $\mathbf{q}$ is the load mapping, and note that this function is continuous and  $(\forall n \in K_{1} \subset \mathbb{N})$ $\mathbf{g}(\boldsymbol{\rho}_n,\mathbf{r}_n)=\mathbf{0}$. It follows from the definition of a continuous function that $\lim_{n \in K_{1}}\mathbf{g}(\boldsymbol{\rho}_n,\mathbf{r}_n)=\mathbf{g}(\boldsymbol{\rho}^{\ast},\mathbf{r}^{\ast})=\mathbf{0}$. Therefore, the limit of the subsequence $(\boldsymbol{\rho}_{n})_{\in K_{1} \subset \mathbb{N}}$ is the unique fixed point $\mathbf{f}(\mathbf{r}^{\ast})=\boldsymbol{\rho}^{\ast}$ and $\lim_{n \in K_{1}}\mathbf{f}(\mathbf{r}_{n})=\mathbf{f}(\mathbf{r}^{\ast})$. Since $\mathcal{X}$ is compact, $\mathbf{f}$ is uniformly continuous on $\mathcal{X}$.
\end{proof}
%%Problem Formulation 
\section{The Learning Problem}\label{section:the_learning_problem}
In this section we present a learning algorithm which is not only robust and optimal in a challenging machine learning scenario, but also preserves the monotonicity and continuity of the function to be approximated.  
\subsection{Minimax Optimal Approximation}
Let the training data set be denoted by $\mathcal{D}=\{(\mathbf{r}^k,\boldsymbol{\rho}^k)\}^{K}_{k=1}$, $(\mathbf{r}^k,\boldsymbol{\rho}^k)\in (\mathcal{X} \times \mathcal{Y})$, where $\boldsymbol{\rho}^k:=\mathbf{f}(\mathbf{r}^k)$ are the measured cell-load values generated by the underlying function $\mathbf{f}:\mathcal{X}\rightarrow\mathcal{Y}$. Our objective is to approximate the value $\mathbf{f}(\mathbf{r})$ for any $\mathbf{r} \in \mathcal{X}$, i.e., our objective is to solve an interpolation problem given $\mathcal{D}$.

In the classical approximation theory (see, for example, \cite{Weinberger1959,Sukharev1992,traub1980}), the data interpolation problem entails computing an approximation $\mathbf{g}$ of the function $\mathbf{f}$ by observing the values in the set $\mathcal{D}$ and then replacing future evaluations of $\mathbf{f}(\mathbf{r})$ with $\mathbf{g}(\mathbf{r})$ for any $\mathbf{r} \in \mathcal{X}$. We have shown by Theorem \ref{proposition:three} that the function $\mathbf{f}:\mathbf{r}\mapsto\boldsymbol{\rho}\in\text{Fix}(\Gamma_\mathbf{r})$ is a uniformly continuous function on the compact set $\mathcal{X}$. 
%For a fixed continuous function $\mathbf{f}\in C(\mathcal{X})$ on a compact set $\mathcal{X}$, there exists a unique interpolating polynomial function $\mathbf{g}$ of degree less or equal to $K$ that gives the best (minimum norm-distance) approximation of $\mathbf{f}$ given the dataset $\mathcal{D}$ (reference missing). However, 
Clearly there are infinitely many functions in the space $C(\mathcal{X})$ that interpolate $\mathcal{D}$. Since we are interested in a robust approximation of the unknown $\mathbf{f}^{\ast} \in C(\mathcal{X})$, we aim at minimizing the 
worst-case error \cite{Sukharev1992},
\begin{equation}
\text{E}_{w}(\mathbf{g})=\underset{\mathbf{f}\in C(\mathcal{X})}{\sup}\left\|\mathbf{f}-\mathbf{g}\right\|_{C(\mathcal{X})},
\label{eq:error}
\end{equation}
where $\mathbf{g}\in C(\mathcal{X})$ is confined to a class of functions such that $\mathbf{f}^{\ast}(\mathbf{r}^{k})=\mathbf{g}(\mathbf{r}^{k})$, $k=1,\ldots,K$. 

Unfortunately, if the only information about $\mathbf{f}^{\ast}$ are the observations in $\mathcal{D}$ and the fact that  $\mathbf{f}^{\ast}\in C(\mathcal{X})$, the worst-case error can be arbitrarily large for some appropriately chosen $\mathbf{g}\in C(\mathcal{X})$. 
However, if $\mathbf{f}^{\ast}$ belongs to a compact subset of $C(\mathcal{X})$, the $\sup$ in \eqref{eq:error} is attained and we can guarantee a worst-case finite error. A sufficient condition for compactness of a subset in the space $\mathcal{C}(\mathcal{X})$ is that all functions in the subset are Lipschitz continuous with the same Lipschitz constant. Moreover, Lipschitz continuity imposes a nonlinear restriction on the function class. In this case it has been shown in \cite{Weinberger1959}, that for any given $\mathbf{r}\in\mathcal{X}$, the values $\mathbf{f}^{\ast}(\mathbf{r})$ belong to a closed interval, and the optimal approximation to $\mathbf{f}^{\ast}(\mathbf{r})$ is the midpoint of this interval. This means that no matter how inconvenient the machine learning scenario is (for example, a small sample set and fast changing statistics), we are guaranteed a certain finite worst-case error. 
Therefore, in addition to the monotonicity and uniform continuity properties of the fixed point function we have proved in the previous section, we make the following assumption\footnote{The function $\mathbf{f}:\mathcal{X}\rightarrow\mathcal{Y}$ can indeed be shown to be Lipschitz continuous by applying the theory of implicit functions, and by exploiting the fact that $\mathbf{f}:\mathcal{X}\rightarrow\mathcal{Y}$ is continuously differentiable over the compact set $\mathcal{X}$. The proof has technical details which we omit for brevity.}:
\begin{assumption}
The function $\mathbf{f}:\mathcal{X}\rightarrow\mathcal{Y}$ is a (component-wise) Lipschitz monotone function (LIMF) on the set $\mathcal{X}$ (see Definition \ref{def:lmf}).
\label{assumption:one}
\end{assumption}
We can now state the optimal approximation as an optimization problem.
\begin{defn}[\textit{Optimal Approximation}]\label{def:optimal_approximation}
\textit{Let $\mathcal{X} \subset \mathbb{R}^{N}_{++}$ and $\mathcal{Y}\subset[0,1]^M$. Let $\mathcal{D}=\{(\mathbf{x}^k,\boldsymbol{\rho}^k) \in (\mathcal{X} \times \mathcal{Y})\}_{k=1}^{K}$, be a data set and assume that $\boldsymbol{\rho}^k:=\mathbf{f}(\mathbf{x}^k), k=1,\ldots,K$, are values generated by an unknown function $\mathbf{f}\in \mathcal{F}:\mathcal{X}\rightarrow\mathcal{Y}$, where $\mathcal{F}\subset C(\mathcal{X})$ is a set of LIMF functions. The minimax optimal approximation (or optimal recovery) problem is stated as follows:
\begin{problem}\label{problem:main}\cite{Sukharev1992,Beliakov2005,Belford1972}
Find $\mathbf{g}:\mathcal{X}\rightarrow\mathcal{Y}$, such that
\begin{equation}
\mathbf{g}\in\underset{\mathbf{g}\in S}{\arg\min}\,\text{E}_{\text{max}}(\mathbf{g})
\end{equation}
where $S:=\left\{\mathbf{g}\in C(\mathcal{X})|\mathbf{g}(\mathbf{x}^k)=\mathbf{f}(\mathbf{x}^k),\forall k \in \left\{1,\ldots,K\right\}\right\}$, and $\text{E}_{\text{max}}(\mathbf{g}):=\max_{\mathbf{f}\in \mathcal{F}}\|\mathbf{f}-\mathbf{g}\|_{C(\mathcal{X})}$ is the worst-case error from \eqref{eq:error} computed over the set $\mathcal{F}$.
\end{problem}}
\end{defn}
%\begin{rem}
%Problem \ref{problem:main} is a generalization of the Chebychev minimax approximation. In the canonical form of the minimax problem, the function $\mathbf{f}^{\ast}\in C(\mathcal{X})$, where $\mathcal{X}$ is compact, is given. In our case, however, we only have the information that $\mathbf{f}^{\ast}$ lies in some compact subset $\mathcal{F}$ of $C(\mathcal{X})$. Therefore, any optimal solution $\mathbf{g}\in C(\mathcal{X})$ to Problem \ref{problem:main} is a minimax approximation, simultaneously, for each $\mathbf{f} \in \mathcal{F}$. Any optimal solution to Problem \ref{problem:main} is refered to as a \textit{Chebychev center}\cite{Sukharev1992} and its existence is guaranteed by the fact that $\mathcal{F}$ is a bounded set in the space of real-valued contuous function over a compact set. 
%\end{rem}
In \cite{Beliakov2005}, the author provides a framework for monotone interpolation of Lipschitz functions defined over a compact set by using a \textit{central scheme} \cite{Sukharev1992,traub1980}, that can be used to construct an optimal solution to Problem \ref{problem:main}.  
%The name \textit{central scheme} refers to the fact that optimal solutions to the Problem \ref{problem:main} are the Chebychev centres of 
%Therefore, under Assumption \ref{assumption:one} and compactness of the feasible rate demand set, we can use this framework to construct the minimax optimal approximation of Problem \ref{problem:main}. 
The following Fact summarizes the important properties of an optimal solution constructed using this framework. 
\begin{fact}
Let $\mathcal{D}=\{(\mathbf{x}^k,\boldsymbol{\rho}^{k}) \in (\mathcal{X} \times \mathcal{Y})\}_{k=1}^{K}$, be a dataset generated by an unknown function $\mathbf{f} \in \mathcal{F}$, where $\mathcal{F}$ is a set of LIMF functions (see Definition \ref{def:lmf}). Then, we have the following:
\begin{enumerate}
\item An optimal minimax approximation $\mathbf{g}\in C({\mathcal{X}})$ of $\mathbf{f}\in\mathcal{F}$ is given by 
\begin{equation}\label{eqn:opt_inter}
 (\forall i \in \mathcal{M})(\forall \mathbf{x} \in \mathcal{X}) \,\, g_{i}(\mathbf{x})=\frac{\sigma^{i}_{l}(\mathbf{x})+ \sigma^{i}_{u}(\mathbf{x})}{2},
\end{equation}
where $\sigma^{i}_l(\mathbf{x})=\max_k\{f_{i}(\mathbf{x}^{k})-L_{i}\|(\mathbf{x}-\mathbf{x}^{k})_{+}\|\}$, $\sigma^{i}_u(\mathbf{x})=\min_k\{f_{i}(\mathbf{x}^{k})+L_{i}\|(\mathbf{x}-\mathbf{x}^{k})_{+}\|\}$, $f_{i}(\mathbf{x}^{k})=\rho_{i}^{k}$, and $L_{i}$ is the Lipschitz constant of the $i$th component $f_{i}$. 
\item \label{fact_two_three}  $\mathbf{g} \in \mathcal{F}\subset C(\mathcal{X})$.
\end{enumerate}
\label{fact:main}
\end{fact}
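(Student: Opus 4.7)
My plan is to exploit the fact that both the objective $\text{E}_{\text{max}}(\mathbf{g})$ and the LIMF class $\mathcal{F}$ decouple across the $M$ output coordinates: the $l_\infty$-type norm in \eqref{eqn:cheb_norm} together with the per-component Lipschitz constants reduce the problem to $M$ independent scalar optimal recovery problems. For each fixed $i\in\mathcal{M}$ and each query point $\mathbf{x}\in\mathcal{X}$, I would characterize the admissible set
\[
A_i(\mathbf{x})=\{\phi(\mathbf{x}):\phi \text{ monotone and } L_i\text{-Lipschitz on }\mathcal{X},\ \phi(\mathbf{x}^k)=\rho^k_i\ \forall k\}.
\]
Once $A_i(\mathbf{x})$ is shown to be the closed interval $[\sigma^i_l(\mathbf{x}),\sigma^i_u(\mathbf{x})]$, classical optimal recovery theory (\cite{Sukharev1992,Weinberger1959,Belford1972}) identifies its midpoint as the Chebyshev center and hence as the minimax-optimal value at $\mathbf{x}$, and collating over components and query points yields the formula \eqref{eqn:opt_inter}.

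The crux is to prove $A_i(\mathbf{x})=[\sigma^i_l(\mathbf{x}),\sigma^i_u(\mathbf{x})]$. For the inclusion in the interval, I would combine monotonicity and the Lipschitz condition via componentwise lattice operations: for each sample $\mathbf{x}^k$, introduce $\mathbf{v}=\mathbf{x}\vee\mathbf{x}^k$ and $\mathbf{u}=\mathbf{x}\wedge\mathbf{x}^k$. Monotonicity gives the sandwich $\phi(\mathbf{u})\leq\phi(\mathbf{x}),\phi(\mathbf{x}^k)\leq\phi(\mathbf{v})$, and the $L_i$-Lipschitz bound applied between $\mathbf{x}^k$ and $\mathbf{v}$ (respectively between $\mathbf{x}^k$ and $\mathbf{u}$), together with the identities $\|\mathbf{v}-\mathbf{x}^k\|=\|(\mathbf{x}-\mathbf{x}^k)_+\|$ and $\|\mathbf{u}-\mathbf{x}^k\|=\|(\mathbf{x}^k-\mathbf{x})_+\|$, yields per-datum upper and lower estimates on $\phi(\mathbf{x})$; taking the sharpest bound over $k$ produces $\sigma^i_u$ and $\sigma^i_l$. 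For the reverse inclusion --- attainment of the endpoints --- I would exhibit the extremal candidates $\phi=\sigma^i_l$ and $\phi=\sigma^i_u$ explicitly and verify that each belongs to the admissible LIMF class and interpolates the data.

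Item 2, namely $\mathbf{g}\in\mathcal{F}$, follows from the stability of the LIMF structure under the operations defining $g_i$. Each elementary term in $\sigma^i_l$ and $\sigma^i_u$ is monotone and $L_i$-Lipschitz in $\mathbf{x}$ (the map $\mathbf{x}\mapsto\|(\mathbf{x}-\mathbf{x}^k)_+\|$ is nondecreasing and $1$-Lipschitz, and symmetrically $\mathbf{x}\mapsto\|(\mathbf{x}^k-\mathbf{x})_+\|$ is nonincreasing and $1$-Lipschitz); finite pointwise maxima and minima preserve both monotonicity and the common Lipschitz constant, and the midpoint --- being a convex combination --- preserves them once more. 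Interpolation of $\mathcal{D}$ reduces to showing $\sigma^i_l(\mathbf{x}^j)=\sigma^i_u(\mathbf{x}^j)=\rho^j_i$, which follows because the $k=j$ term contributes $\rho^j_i$ to both extrema while every other term is dominated on the appropriate side by $\rho^j_i$ precisely because the data themselves satisfy the LIMF inequalities. The main obstacle I anticipate is establishing this self-consistency of $\sigma^i_l$ and $\sigma^i_u$ as honest admissible interpolants rather than mere envelopes: this step quietly relies on the standing hypothesis that $\mathcal{D}$ was generated by some $\mathbf{f}\in\mathcal{F}$, and it forces a careful bookkeeping of the monotonicity and Lipschitz constraints across all pairs of samples.
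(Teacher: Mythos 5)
The paper itself gives no proof of Fact~\ref{fact:main}: it is stated as a summary of the central-scheme construction of \cite{Beliakov2005} (with \cite{Sukharev1992,traub1980,Weinberger1959}), so there is no internal argument to compare yours against. Your proposal reconstructs exactly the optimal-recovery argument that underlies that citation, and its ingredients are the right ones: componentwise decoupling of both $\text{E}_{\text{max}}$ and the LIMF constraints; the lattice sandwich $\phi(\mathbf{x}\wedge\mathbf{x}^k)\leq\phi(\mathbf{x})\leq\phi(\mathbf{x}\vee\mathbf{x}^k)$ combined with the Lipschitz bound to trap the admissible values between the two envelopes; the verification that the envelopes are themselves monotone, $L_i$-Lipschitz interpolants (which, as you correctly note, uses that the data were generated by some $\mathbf{f}\in\mathcal{F}$, i.e.\ $\rho^j_i\leq\rho^k_i+L_i\|(\mathbf{x}^j-\mathbf{x}^k)_+\|$ for all $j,k$); and the midpoint step. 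One point to make explicit rather than delegating to ``classical theory'': pointwise Chebyshev-centering must be upgraded to optimality in the uniform norm of $\text{E}_{\text{max}}$. This follows in one line from precisely your two ingredients --- since $\sigma^i_l,\sigma^i_u$ are admissible, every interpolant $g_i$ has worst-case error at least $\sup_{\mathbf{x}}\tfrac{1}{2}\bigl(\sigma^i_u(\mathbf{x})-\sigma^i_l(\mathbf{x})\bigr)$, and the midpoint attains this bound by the sandwich --- but it should be written down.

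The one substantive discrepancy is in the lower envelope. Your lattice identity $\|\mathbf{u}-\mathbf{x}^k\|=\|(\mathbf{x}^k-\mathbf{x})_+\|$ yields $\sigma^i_l(\mathbf{x})=\max_k\{\rho^k_i-L_i\|(\mathbf{x}^k-\mathbf{x})_+\|\}$, whereas the Fact (and Algorithm~1) print $\max_k\{\rho^k_i-L_i\|(\mathbf{x}-\mathbf{x}^k)_+\|\}$. Your version is the correct one and is what the Beliakov construction actually uses; the printed formula is not a lower bound on the admissible values. A one-dimensional check: with a single sample $(x^1,\rho^1)=(0,0)$, $L_i=1$, the monotone $1$-Lipschitz interpolants at $x=-1$ take exactly the values $[-1,0]$, yet the printed $\sigma_l(-1)=0$, so the printed $g(-1)=0$ is the upper endpoint of the admissible interval rather than the minimax midpoint $-1/2$. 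So what you prove is a corrected version of the statement; you should flag this explicitly (it is a typo in the Fact as quoted) rather than silently substituting the argument of $(\cdot)_+$, since otherwise your proof appears not to establish the claim as literally written.
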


%%%%%%%%%%%%%%%%%%%%%%%%%%%%%%%%%%%%%%%%%%%%%%%%%%%%%%%
\section{Algorithm and Simulation}
%In this section, we compare the performance of the proposed method with two standard techniques in machine learning. The aim of the simulation is to compare the performance under the practical but challenging machine learning scenario whera a small number of training samples $K$ are available the dimension of the feature vector $N$ is large. 
We consider a neighborhood with $M=10$ BS sites and $N=50$ TPs placed randomly. Each TP is connected to a single BS with the best received SNR. The pathloss for links between BSs and TPs follows the 3GPP ITU propagation model for urban macro cell  environments. In the following, we restrict our attention to a single BS and omit the index $i$ because the cell-load approximation ($g_i$ in \eqref{eqn:opt_inter}) is computed independently at each BS.
\subsection{Noisy Training Data}\label{sec:noisy_measurements}
Practical systems are subject to noise during measurement, so that instead of a data set $\mathcal{D}=\{(\mathbf{r}^k,f(\mathbf{r}^{k}))\}_{k=1}^{K}$, a noisy training data set $\mathcal{D}^{\text{noise}}=\{(\mathbf{r}^k,y^{k}=f(\mathbf{r}^{k}))+\epsilon(\mathbf{r}^k))\}$ is available, where $\epsilon(\mathbf{r}^k)$ is the measurement noise assumed to be bounded. 
As a consequence, $y^{k}$ might not be compatible with the monotonicity property of $f$ and must be smoothed to obtain a compatible set. In more detail, we first estimate the Lipschitz constant $L$ by $L:=\max_{k\neq j} \frac{|y^{k}-y^{j}|-2 \epsilon}{\|\mathbf{r}^{k}-\mathbf{r}^{j}\|}$, where $\epsilon:=\sup_k |\epsilon(\mathbf{r}^k)|$ \cite{Calliess2014}. 
The monotone-smoothing problem is given by a linear program (LP) \cite{Beliakov2005}
\begin{align}
\centering
\underset{q_{+}^k,q_{-}^k \geq 0}{\min.} & \,\sum^{K}_{k=1}{|q^k|}\nonumber\\
\text{s.t.}\,\, & q^k-q^j\leq y^{j}-y^{k} + L\|(\mathbf{r}^{k}-\mathbf{r}^{j})_{+}\|,\nonumber\\
&\forall k,j\in \{1,2,\ldots,K\}\nonumber
\end{align}
 where $q^k=q_{+}^k-q_{-}^k$, $|q^k|=q_{+}^k+q_{-}^k$, and $q_{+}^k,q_{-}^k \geq 0$ are the optimization variables. The smoothed compatible values can be calculated as $\rho^{k}:=y^{k}+q^k$. An LP is a convex optimization problem and can be solved by a standard convex or LP solver. 
\begin{algorithm}[t]
 \small
 \begin{algorithmic}[1]\label{algorithm:one}
  \caption{Load Estimation At Each BS}
\State \textbf{Training:} (a) Collect the training data set $\mathcal{D}^{\text{noise}}=\{(\mathbf{r}^k,y^{k}=f(\mathbf{r}^{k})+\epsilon(\mathbf{r}^k))\}_{k=1}^{K}$. (b) Perform the estimation of $L$ and data smoothing to construct the compatible set $\mathcal{D}=\{(\mathbf{r}^k,\rho^{k})\}_{k=1}^{K}$ as described in Section \ref{sec:noisy_measurements}.  
\State \textbf{Online Prediction:} Given a new rate demand vector $\mathbf{r}\in\mathcal{X}$, perform the following direct computation in Fact \ref{fact:main}:
        \begin{equation}
	       \begin{split}
	     g(\mathbf{r})=\frac{1}{2}(\underset{k}{\max}\{\rho^{k}-L\|(\mathbf{r}-\mathbf{r}^{k})_{+} \| \})+\\
			\frac{1}{2}(\underset{k}{\min} \{\rho^{k}+L \|(\mathbf{r}-\mathbf{r}^{k})_{+}\| \}).\nonumber
	        \end{split}
	        %\label{eqn:optimal_approximation}
         \end{equation}	
\end{algorithmic}
\end{algorithm}

\subsection{Implementation and Complexity}
The cell-load estimation algorithm is shown in Algorithm $1$. Note that, for a given $\mathbf{r}\in\mathcal{X}$ each BS $i\in\mathcal{M}$ can calculate the component $g_{i}(\mathbf{r})$ independently of other BSs using \eqref{eqn:opt_inter}. Therefore, Algorithm 1 is scalable to a larger dense network and is amenable to distributed implementation. The \textit{training} step can be performed by standard convex solvers whereas the complexity of the \textit{online prediction} step is linear in sample size $K$, i.e, $\mathcal{O}(K)$. Therefore for small sample sizes considered in this study, Algorithm $1$ exhibits a fast computational speed. 
\subsection{Results}
We train the network over the set $\mathcal{X}=[\mathbf{r}^{\text{min}},\mathbf{r}^{\text{max}}] \subset \mathbb{R}_{++}^{50}$, where $\mathbf{r}^{\text{min}}\geq (10^{6})\,\boldsymbol{1}$ and $\mathbf{r}^{\text{max}}\leq (10^{7})\,\boldsymbol{1}$ is the pre-configured feasible range (in bits/s) of rate vectors, where $\boldsymbol{1}\in \mathbb{R}_{++}^{50}$ is the ones vector. We calculate the cell-load values using the fixed point iterative method in Fact $1$ with the cell-load mapping \eqref{eqn:load_mapping}. Other important parameters are: $RB=20\,\text{MHz}$, $p_i=1\,\text{W}$, $\sigma^{2}=1.38\times10^-23\times300/20\times10^5$. Normally distributed random noise with $\epsilon=0.05$ is added to the data. The \textit{training step} is performed by a standard convex solver. 

We compare the performance of Algorithm $1$ and two other standard machine learning techniques, namely the standard Gaussian kernel regression and the 2-nearest neighbor interpolation. Note that neither of these two techniques are in general shape preserving. We use these two techniques because they are able to handle problems involving high-dimensional multivariate scattered data such as the case in this study \cite{Beliakov2005}. For brevity we compare the quality (in terms of \textit{Pearson's correlation coefficient}) and accuracy (in terms of \textit{root mean square error} (RMSE)) for cell-load predictions at a single BS. Similar results were obtained for each BS. We simulate increasing sample size $K$ and make $100\,000$ test predictions at random values of rate demand vectors in $\mathcal{X}=[\mathbf{r}^{\text{min}},\mathbf{r}^{\text{max}}]$ for each value of $K$ to gather reliable statistics. 

It can be  observed in Figure \ref{fig:figure_one} that our proposed framework shows a more robust and consistent performance both in terms of quality of prediction and accuracy over the range of sample sizes as compared to the other two techniques, especially for small sample sizes, i.e, it is robust under uncertainty. The improvement in RMSE with increasing sample sizes is due to the decrease in uncertainty about the true function $\mathbf{f}$. Note that even though the cost function \eqref{eq:error} which Algorithm $1$ optimizes is not the same as the RMSE, we can still represent its performance using a standard error measure like RMSE. At values near $K=600$ the three techniques show comparable performance in terms of RMSE, but in contrast to our framework, the other techniques are not guaranteed to be shape preserving and the predictions might not be compatible with the monotonicity property of the function. 

\begin{figure}[t]
  \centering
 \includegraphics[width=0.49\textwidth]{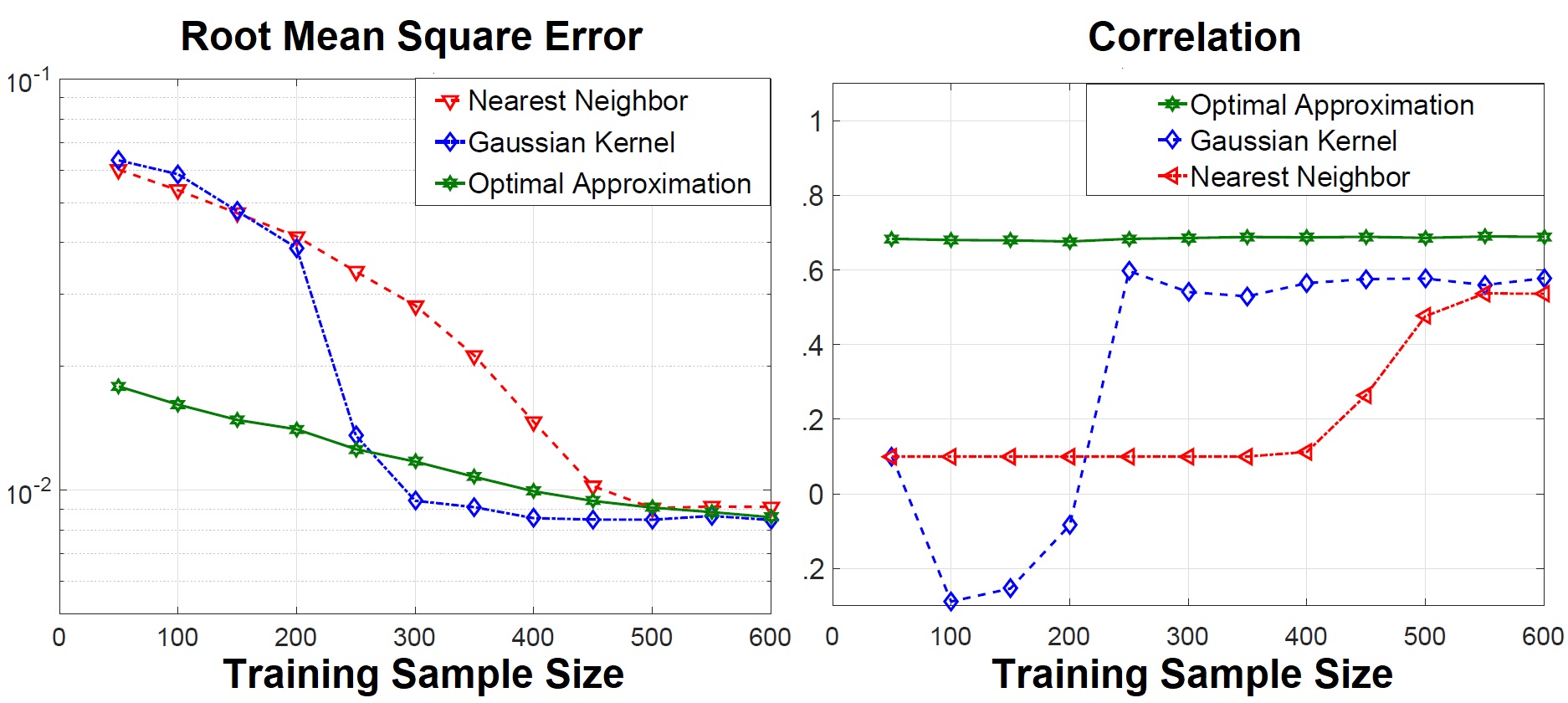}
 \caption{RMSE and (Pearson's) Correlation coefficient}
\label{fig:figure_one}
\end{figure}

\newpage
\bibliographystyle{IEEEbib}
\bibliography{IEEEabrv,library}

\end{document}